\documentclass[11pt]{article}

\usepackage[T1]{fontenc}
\usepackage[utf8]{inputenc}
\usepackage{arxiv}
\usepackage{amsmath,amssymb,amsthm,mathtools}
\usepackage{array,booktabs,tabularx}
\usepackage{enumitem}
\usepackage{microtype}
\usepackage{xurl}
\usepackage[hidelinks]{hyperref}

\setlist{itemsep=0.25em,topsep=0.5em}
\hypersetup{
  pdftitle={Costs of Arbitrary Real Matrix Factorizations for Pure-DP Continual Counting},
  pdfauthor={Awnon Bhowmik; Mahmudul Hasan},
  pdfsubject={Prefix matrices, p-nuclearity, and pure differential privacy},
  pdfkeywords={continual counting, differential privacy, matrix mechanisms,
    operator ideals, p-nuclearity, approximation numbers}
}

\newcommand{\R}{\mathbb R}
\newcommand{\ctwo}{c_{2}}
\newcommand{\cfrob}{c_{\mathrm F}}
\newcommand{\gtwo}{\gamma_{2}}
\newcommand{\gfrob}{\gamma_{\mathrm F}}
\newcommand{\nucpow}{\mathfrak n}
\newcommand{\nucideal}{\mathfrak N}
\newcommand{\eps}{\varepsilon}
\newcommand{\dist}{\operatorname{dist}}
\newcommand{\rank}{\operatorname{rank}}
\newcommand{\MaxSE}{\operatorname{MaxSE}}
\newcommand{\MeanSE}{\operatorname{MeanSE}}
\newcolumntype{Y}{>{\raggedright\arraybackslash}X}
\newcolumntype{P}[1]{>{\raggedright\arraybackslash}p{#1}}

\newtheorem{theorem}{Theorem}[section]
\newtheorem{lemma}[theorem]{Lemma}
\newtheorem{proposition}[theorem]{Proposition}
\newtheorem{corollary}[theorem]{Corollary}
\theoremstyle{definition}
\newtheorem{definition}[theorem]{Definition}
\theoremstyle{remark}
\newtheorem{remark}[theorem]{Remark}

\numberwithin{equation}{section}

\title{Costs of Arbitrary Real Matrix Factorizations\\
for Pure-DP Continual Counting}
\author{%
  Awnon Bhowmik\\
  \small Department of Engineering and Computer Science\\
  \small Colorado Technical University\\
  \small \texttt{awnonbhowmik@outlook.com}
  \And
  Mahmudul Hasan\\
  \small Department of Mathematics\\
  \small University of Dhaka\\
  \small \texttt{mahmudul-2016713604@math.du.ac.bd}
}
\date{}
\renewcommand{\shorttitle}{Arbitrary Real Factorization Costs for Continual Counting}

\begin{document}
\maketitle

\begin{abstract}
Let \(T_n\) be the lower-triangular prefix-sum matrix and let
\(\cfrob(T_n)\) and \(\ctwo(T_n)\) be the factorization costs that govern mean
and maximum per-coordinate squared error of the Laplace matrix mechanism under
pure \(\eps\)-differential privacy, for \(\eps>0\).  We prove
\(\cfrob(T_n),\ctwo(T_n)=\Theta\bigl((\log(n+1))^{3/2}\bigr)\)
with no sign, sparsity, or squareness restriction and with arbitrary finite
inner dimension.  Consequently, within the pure-\(\eps\)-DP
matrix-mechanism class the optimized maximum and mean squared errors are both
\(\Theta(\eps^{-2}\log^{3}(n+1))\).  Under the factorization contract of
Arkhipov and Kalinin (arXiv:2607.08963v1), who prove the matching lower order
for factors with entries in \(\{0,1\}\) and state the arbitrary-factor
extension as open, the theorem below establishes the order for arbitrary real
factors.  The lower bound runs through a \(p\)-nuclear obstruction: an
aggregate column-width
estimate \(D_k(T_n)\asymp n^{3/2}k^{-1/2}\), valid in the low-rank range
\(1\leq k\leq n/16\), for the prefix chain, fed into the
classical approximation-space conversion of Pietsch and
Hinrichs--Pietsch, becomes harmonic at the critical exponent \(p=2/3\), and
Hölder's inequality transfers it to both factorization costs.  The same
computation determines \(\nucpow_p(T_n)\) for each fixed \(0<p<1\): order \(n\)
below \(2/3\), \(n\log n\) at \(2/3\), and \(n^{3p/2}\) above.  A Fenwick
interval factorization supplies matching upper bounds.  The claims are confined
to pure-\(\eps\)-DP Laplace matrix mechanisms and the two stated squared-error
criteria; they do not cover non-matrix continual mechanisms, approximate-DP
sensitivity, or expected maxima across coordinates.
\end{abstract}

\noindent\textbf{Keywords:}
continual counting; differential privacy; matrix mechanisms; operator ideals;
\(p\)-nuclearity; approximation numbers.

\section{Introduction}

Continual counting releases every prefix sum of a stream while protecting each
update~\cite{dwork-continual,chan-shi-song}.  For a horizon \(n\), the workload
is the lower-triangular matrix
\begin{equation}
  T_n(i,j)=\mathbf 1\{j\leq i\},
  \qquad 1\leq i,j\leq n.
  \label{eq:prefix-matrix}
\end{equation}
A matrix mechanism~\cite{li-matrix-mechanism} factors this workload as
\(T_n=LR\), adds independent noise to \(Rx\), and reconstructs the prefix
answers with \(L\).  Under pure \(\eps\)-differential privacy and unit
\(\ell_1\) adjacency the relevant sensitivity of \(R\) is
\(\lVert R\rVert_{1\to1}\), the largest \(\ell_1\) norm of a column
(Lemma~\ref{lem:sensitivity}), and the Laplace mechanism calibrated to it is
\(\eps\)-differentially private (Proposition~\ref{prop:privacy}).  Two
factorization costs therefore arise:
\begin{align}
  \cfrob(T_n)
  &=
  \inf_{T_n=LR}
  \frac{\lVert L\rVert_{\mathrm F}}{\sqrt n}
  \lVert R\rVert_{1\to1},
  \label{eq:cf-def}\\
  \ctwo(T_n)
  &=
  \inf_{T_n=LR}
  \lVert L\rVert_{2\to\infty}
  \lVert R\rVert_{1\to1}.
  \label{eq:c2-def}
\end{align}
The infima range over arbitrary finite inner dimensions and arbitrary real
factors; these are the two costs of Arkhipov and
Kalinin~\cite[eqs.~(5)--(6)]{arkhipov-kalinin}.

Arkhipov and Kalinin prove upper bounds of order \((\log n)^{3/2}\) for both
costs.  They also prove matching lower bounds when both factors have entries
in \(\{0,1\}\), and their arXiv v1 closes by stating that extending the lower
bound to arbitrary matrix factorizations remains open.  The binary reduction
is based on support intersections and does not survive signed cancellation: a
real factorization can represent a zero entry by cancellation among many
nonzero products, so the arbitrary-real question requires a different
invariant.

The comparison made here is deliberately versioned and contract-specific.  We
give a self-contained proof of the arbitrary-real matrix-factorization lower
bound stated as open in Arkhipov--Kalinin v1, under
\eqref{eq:cf-def}--\eqref{eq:c2-def} exactly as printed there.  One general
step of the proof is classical and is cited rather than claimed; see
Remark~\ref{rem:classical-conversion} and Section~\ref{sec:attribution}.

\subsection{Results}

The main theorem gives the same exponent without sign, sparsity, or
squareness restrictions and with arbitrary finite inner dimension.

\begin{theorem}[Arbitrary-real prefix-factorization order]
\label{thm:main-intro}
There are absolute constants \(0<c\leq C<\infty\) such that, for every
\(n\geq1\),
\begin{equation}
  c(\log(n+1))^{3/2}
  \leq
  \cfrob(T_n)
  \leq
  \ctwo(T_n)
  \leq
  C(\log(n+1))^{3/2}.
  \label{eq:main-intro}
\end{equation}
\end{theorem}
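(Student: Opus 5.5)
The plan splits into three parts: the middle inequality, the upper bound, and the lower bound. The lower bound is the only substantive one.

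\textbf{Middle inequality.} Since \(\lVert L\rVert_{\mathrm F}^2=\sum_i\lVert L_{i,:}\rVert_2^2\leq n\lVert L\rVert_{2\to\infty}^2\), every factorization satisfies \(\lVert L\rVert_{\mathrm F}/\sqrt n\leq\lVert L\rVert_{2\to\infty}\). Taking infima gives \(\cfrob(T_n)\leq\ctwo(T_n)\).

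\textbf{Upper bound.} I would use the Fenwick (binary-interval) factorization. Take \(R\) to have one row per dyadic interval \([a2^\ell+1,(a+1)2^\ell]\subseteq[1,n]\) (after padding \(n\) to a power of two), with entries the indicator of that interval. Each column of \(R\) lies in one interval per level, so \(\lVert R\rVert_{1\to1}\leq\lceil\log_2 n\rceil+1\). Each prefix \([1,i]\) is a disjoint union of at most \(\log_2 n+1\) dyadic intervals, so \(L\) has \(0/1\) rows with at most that many ones. Hence \(\lVert L\rVert_{2\to\infty}\leq(\log_2 n+1)^{1/2}\), and the product is \(O((\log(n+1))^{3/2})\). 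Truncating the padded factorization to the first \(n\) rows and columns preserves both norms. For small \(n\) the constant \(C\) absorbs everything.

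\textbf{Lower bound.} I would go through the \(p\)-nuclear quasi-norm at \(p=2/3\). Given \(T_n=LR\) with inner dimension \(m\), write \(T_n=\sum_{k=1}^m \ell_k r_k^{\top}\), where \(\ell_k\) are the columns of \(L\) and \(r_k^\top\) the rows of \(R\). The relevant \(p\)-nuclear sum is \(\sum_k(\lVert\ell_k\rVert_2\lVert r_k\rVert_\infty)^{p}\) or a suitable mixed variant. Let \(a_k=\lVert\ell_k\rVert_2\) and \(b_k=\lVert r_k\rVert_1\), so \(\sum_k a_k^2=\lVert L\rVert_{\mathrm F}^2\). Hölder with exponents \(3\) and \(3/2\) on \(\sum_k (a_kb_k)^{2/3}\) gives
\[
\sum_k (a_kb_k)^{2/3}\leq\Bigl(\sum_k a_k^2\Bigr)^{1/3}\Bigl(\sum_k b_k\Bigr)^{2/3}.
\]
Also \(\sum_k b_k=\sum_{\text{columns}}\lVert R e_j\rVert_1\leq n\lVert R\rVert_{1\to1}\). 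Combining these yields
\[
\nucpow_{2/3}(T_n)\leq \lVert L\rVert_{\mathrm F}^{2/3}\,n^{2/3}\lVert R\rVert_{1\to1}^{2/3}=n\,\bigl(\tfrac{\lVert L\rVert_{\mathrm F}}{\sqrt n}\lVert R\rVert_{1\to1}\bigr)^{2/3},
\]
after normalizing the nuclear quasi-norm with the appropriate \(\ell_2\)/\(\ell_\infty\) or \(\ell_1\)-type norms on the factors. It then suffices to prove \(\nucpow_{2/3}(T_n)\gtrsim n\log n\), because that gives \(\cfrob(T_n)\gtrsim(\log n)^{3/2}\).

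\textbf{The main obstacle.} The hard step is the bound \(\nucpow_{2/3}(T_n)\gtrsim n\log n\). I would follow the abstract's route through the Pietsch and Hinrichs--Pietsch approximation-space conversion. This bounds the \(p\)-nuclear quasi-norm from below by a weighted sum of approximation-type quantities: \(\nucpow_p(T)^{?}\gtrsim\sum_k k^{1/p-2}D_k(T)\) over dyadic or all \(k\). Here \(D_k\) is the aggregate column-width, i.e.\ the best total \(\ell_2\)-error over columns when approximating by rank \(<k\) in the relevant mixed norm. I would then prove \(D_k(T_n)\gtrsim n^{3/2}k^{-1/2}\) for \(1\leq k\leq n/16\). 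The approach is to restrict \(T_n\) to \(k\)-adapted block structures: split into \(\sim k\) blocks of length \(n/k\), each of which contains a copy of \(T_{n/k}\) whose columns have \(\ell_2\)-width \(\sim\sqrt{n/k}\) per column against any rank-\(k\) subspace. A pigeonhole or volume argument shows that a \(k\)-dimensional subspace cannot capture most of \(2k\) nearly orthogonal "step" directions. At \(p=2/3\) the weight is \(k^{3/2-2}=k^{-1/2}\), so each term contributes \(n^{3/2}k^{-1}\), and summing over \(k\leq n/16\) gives \(n^{3/2}\log n\). Matching the normalizations gives \(n\log n\) on the scale above. The remaining risks are getting the exact form of the conversion inequality and the normalization right, and I expect these to take most of the care. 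Small \(n\) is handled by adjusting \(c\), using \(\cfrob(T_n)\geq\cfrob(T_1)=1\)-type trivial bounds.
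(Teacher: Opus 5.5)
\textbf{Verdict: genuine gap in the conversion step.} Your middle inequality, the dyadic upper bound and the H\"older reduction $\nucpow_{2/3}(T_n)\le n\bigl(\lVert L\rVert_{\mathrm F}\lVert R\rVert_{1\to1}/\sqrt n\bigr)^{2/3}$ all match the paper. The gap is not a normalization detail; it is exactly where the theorem's extra $\sqrt{\log n}$ comes from. $D_k$ is $1$-homogeneous in $A$ and $\nucpow_p$ is $p$-homogeneous, so the only consistent reading of your inequality is $\nu_p(A)=\nucpow_p(A)^{1/p}\gtrsim\sum_k k^{1/p-2}D_k(A)$. That inequality is true: it is the $w=1$ Lorentz-type weighting, and it follows from the sharp $w=p$ version. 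But with $D_k(T_n)\asymp n^{3/2}k^{-1/2}$ at $p=2/3$ it gives only $\nu_{2/3}(T_n)\gtrsim n^{3/2}\log n$, that is,
\[
  \nucpow_{2/3}(T_n)\gtrsim n(\log n)^{2/3}.
\]
Your H\"older step then yields only $\cfrob(T_n)\gtrsim\log n$, which is the baseline already given by $\gfrob(T_n)\le\cfrob(T_n)$. The sentence ``matching the normalizations gives $n\log n$'' supplies the missing factor $(\log n)^{1/3}$ on the $\nucpow$ scale without justification. No homogeneous inequality of the form you wrote can produce it.

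\textbf{What is needed.} You need the $\ell_p$-weighted profile
\[
  \nucpow_p(A)\ \ge\ C_p^{-1}\sum_{k=1}^{\rank(A)-1}k^{-p}D_k(A)^p ,
\]
which is Hinrichs--Pietsch, Theorem~7.1 with $r=w=p$. Here each rank scale contributes $(D_k/k)^{2/3}\asymp n/k$ directly on the $\nucpow$ scale, so the harmonic sum is $n\log n$ there. The paper proves it in two steps.
\begin{itemize}
\item Sort the atoms by $\lambda_q=\lVert u_q\rVert_2\lVert v_q\rVert_1$, take $E=\operatorname{span}\{u_1,\dots,u_k\}$, and bound the signed residual columnwise to get $\sum_{q>k}\lambda_q\ge D_k(A)$.
\item Apply a reverse Hardy inequality, valid because $p<1$ (dyadic blocks, $(\sum_t b_t)^p\le\sum_t b_t^p$, monotonicity): $\sum_k k^{-p}(\sum_{q>k}\lambda_q)^p\le C_p\sum_q\lambda_q^p$.
\end{itemize}

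\textbf{Width estimate.} Your sketch of this also needs the right form. No $k$-plane is far from every column, since it may contain $k$ of them. Bessel applied to a block partition fixed in advance of $E$ naturally controls only $\sum_j\dist(t_j,E)_2^2$, and that does not give the $\ell_1$ aggregate at order $n^{3/2}k^{-1/2}$. The paper's argument is adaptive:
\begin{itemize}
\item Suppose $n/2$ columns were within $\theta$ of $E$.
\item Choose $2k$ disjoint intervals of length at least $n/(16k)$ whose endpoints are among those close columns.
\item Their indicators $t_\sigma-t_\tau$ are orthogonal and within $2\theta$ of $E$, so Bessel forces $\theta\gtrsim\sqrt{n/k}$.
\end{itemize}
Hence at least half of the columns are at distance of order $\sqrt{n/k}$ from $E$.
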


For a differential-privacy reader the operative consequence is the following,
proved as Corollary~\ref{cor:error-order}.

\begin{corollary}[Optimized squared-error order, restated as
Corollary~\ref{cor:error-order}]
\label{cor:error-intro}
For every \(\eps>0\), within the pure-\(\eps\)-DP Laplace matrix-mechanism
class and over arbitrary finite real factors,
\begin{equation}
  \begin{aligned}
    \inf_{T_n=LR}\MaxSE(\mathcal M_{L,R},n)
    &=
    \Theta\!\left(\frac{\log^3(n+1)}{\eps^2}\right),\\
    \inf_{T_n=LR}\MeanSE(\mathcal M_{L,R},n)
    &=
    \Theta\!\left(\frac{\log^3(n+1)}{\eps^2}\right).
  \end{aligned}
  \label{eq:error-intro}
\end{equation}
\end{corollary}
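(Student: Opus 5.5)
The corollary follows from Theorem~\ref{thm:main-intro} once the two squared-error criteria are written exactly in terms of the two factorization costs. Fix a factorization \(T_n=LR\) with inner dimension \(m\). By Lemma~\ref{lem:sensitivity} and Proposition~\ref{prop:privacy}, the mechanism \(\mathcal M_{L,R}(x)=L(Rx+Z)\) is \(\eps\)-DP when \(Z\) has i.i.d.\ entries distributed as \(\mathrm{Lap}(b)\) with \(b=\lVert R\rVert_{1\to1}/\eps\). It is unbiased. Each Laplace coordinate has variance \(2b^2\), so the squared error of coordinate \(i\) is
\[
\mathbb E\,(L Z)_i^2 = 2b^2\lVert L_{i,:}\rVert_2^2 .
\]
Taking the maximum over \(i\) gives
\[
\MaxSE(\mathcal M_{L,R},n)=\frac{2}{\eps^2}\,\lVert L\rVert_{2\to\infty}^2\lVert R\rVert_{1\to1}^2 .
\]
Averaging over \(i\) gives
\[
\MeanSE(\mathcal M_{L,R},n)=\frac{2}{\eps^2}\,\frac{\lVert L\rVert_{\mathrm F}^2}{n}\lVert R\rVert_{1\to1}^2 .
\]
I will read these definitions off the later section in which \(\MaxSE\) and \(\MeanSE\) are formally defined. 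If the paper's mechanism calibrates Laplace noise with a different constant, the factor \(2\) changes only by an absolute constant, which does not affect the \(\Theta\) statement.

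Next I take the infimum over all factorizations with real entries and arbitrary finite inner dimension. Since \(t\mapsto 2t^2/\eps^2\) is increasing on \([0,\infty)\), it commutes with the infimum. This gives exactly
\[
\inf_{T_n=LR}\MaxSE=\frac{2}{\eps^2}\,\ctwo(T_n)^2,\qquad \inf_{T_n=LR}\MeanSE=\frac{2}{\eps^2}\,\cfrob(T_n)^2 .
\]
Squaring \eqref{eq:main-intro} yields \(c^2\log^3(n+1)\le \cfrob(T_n)^2\le\ctwo(T_n)^2\le C^2\log^3(n+1)\). Both displays in \eqref{eq:error-intro} follow with constants \(2c^2\) and \(2C^2\), uniformly in \(\eps>0\) and \(n\ge1\).

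All of the mathematical difficulty is contained in Theorem~\ref{thm:main-intro}, which I am allowed to assume. The one point that needs care is the scope of the infimum. The corollary's optimization class must coincide with the class used to define \(\cfrob\) and \(\ctwo\): every real factorization of every finite inner dimension, with nothing else optimized. In particular, the noise distribution must be fixed as Laplace calibrated to \(\lVert R\rVert_{1\to1}\). Otherwise the equality between the errors and the squared costs would weaken to an inequality, and only the upper bound would survive. I will state explicitly that the class is exactly the pure-DP Laplace matrix mechanisms \(\mathcal M_{L,R}\) parametrized by factorizations \(T_n=LR\).
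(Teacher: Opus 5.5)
Your proposal is correct and follows the paper's proof essentially verbatim. You get privacy from Proposition~\ref{prop:privacy} and rederive the per-factorization formulas of Proposition~\ref{prop:error} from the Laplace variance $2b^2$. You then pass to the infima by monotonicity of $t\mapsto 2t^2/\eps^2$ to obtain $2\ctwo(T_n)^2/\eps^2$ and $2\cfrob(T_n)^2/\eps^2$, and square Theorem~\ref{thm:main-intro}; your caveat about the scope of the infimum matches the paper's restriction to Laplace mechanisms calibrated to $\lVert R\rVert_{1\to1}$.
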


The lower bound is mediated by a finite \(p\)-nuclear power.  For
\(A\in\R^{m\times n}\) and \(0<p<1\), set
\begin{equation}
  \nucpow_p(A)
  =
  \inf_{A=\sum_{q=1}^{r}u_qv_q^{\mathsf T}}
  \sum_{q=1}^{r}
  \bigl(\lVert u_q\rVert_2\lVert v_q\rVert_1\bigr)^p,
  \label{eq:nuclear-intro}
\end{equation}
where the representation is finite and exact.  Appendix~\ref{sec:nuclear}
records a quantitative finite-dimensional form of the classical
approximation-space conversion, with an explicit constant, that lower-bounds
\(\nucpow_p(A)\) by the complete profile of aggregate column widths.
Specializing it to \(T_n\) yields a phase transition at \(p=2/3\).

\begin{theorem}[Fixed-\(p\) prefix phase diagram; restated and proved as
Theorem~\ref{thm:phase}]
\label{thm:phase-intro}
For every fixed \(0<p<1\),
\begin{equation}
  \nucpow_p(T_n)
  =
  \Theta_p\!\left(
  \begin{cases}
    n, & 0<p<2/3,\\
    n\log(n+1), & p=2/3,\\
    n^{3p/2}, & 2/3<p<1.
  \end{cases}
  \right).
  \label{eq:phase-intro}
\end{equation}
\end{theorem}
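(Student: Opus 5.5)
Both directions of \eqref{eq:phase-intro} go through the complete profile of aggregate column widths, which I take to be
\[
D_k(T_n)=\inf_{\rank B<k}\sum_{j}\lVert (T_n-B)e_j\rVert_2 ,
\]
compatible with the \(\ell_2\)-times-\(\ell_1\) rank-one weights in \eqref{eq:nuclear-intro}.

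\textbf{Upper bounds.} For \(p<2/3\) the trivial representation by columns \(T_ne_j\otimes e_j\) costs \(\sum_j (n-j+1)^{p/2}\), which is \(\asymp n^{1+p/2}\) and therefore too large. So I use a Fenwick/dyadic interval factorization instead. Write each prefix row as a sum of dyadic intervals. Then \(T_n=\sum_I u_I\mathbf 1_I^{\mathsf T}\), where \(u_I\) is the indicator of those rows \(i\) whose Fenwick decomposition uses \(I\).
\begin{itemize}
  \item An interval of length \(2^\ell\) has \(\lVert \mathbf 1_I\rVert_1=2^\ell\).
  \item It is used by about \(2^\ell\) rows, so \(\lVert u_I\rVert_2\approx 2^{\ell/2}\).
  \item There are about \(n2^{-\ell}\) intervals at level \(\ell\).
\end{itemize}
The total is therefore
\[
\sum_{\ell\le\log n} n2^{-\ell}\,2^{3p\ell/2}.
\]
This sum is geometric-decreasing for \(p<2/3\), giving \(n\); constant-per-level at \(p=2/3\), giving \(n\log n\); and dominated by the top level for \(p>2/3\), giving \(n^{3p/2}\). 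The precise support of \(u_I\) may need a variant of the factorization in which each interval serves a contiguous block of rows. I would check the overlap counting but expect only constant losses.

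\textbf{Lower bounds.}
\begin{itemize}
  \item \emph{The bound \(n\).} This holds for all \(p<1\). Split any exact representation into the diagonal contributions: each entry \(T(i,i)=1\) must be produced. A simpler route is to apply the conversion at \(k=1\) together with the subadditivity \((\sum a_q)^p\le\sum a_q^p\) column by column. That gives \(\nucpow_p\ge\sum_j \nucpow_p(\text{column } j)\ge\sum_j\) (something \(\gtrsim1\)), since restricting to the diagonal entries is a contraction.
  \item \emph{The regimes \(p\ge 2/3\).} Here I invoke the appendix conversion, which I expect gives \(\nucpow_p(A)\gtrsim_p \sum_k k^{-1}\,(k^{1-1/p}D_k(A))^p\) or a dyadic-block version. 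Inserting \(D_k(T_n)\asymp n^{3/2}k^{-1/2}\) for \(k\le n/16\) produces terms \(n^{3p/2}k^{-1+p-3p/2}=n^{3p/2}k^{-1-p/2}\). That exponent is wrong for a harmonic series at \(2/3\), so the weight must come in differently. The form I will actually aim for is that the conversion yields \(\sum_k k^{p-1}\bigl(D_k/k\bigr)^p\)-type sums, whose exponent is \(p-1-3p/2+\dots\). I will fix the normalization so that the summand is \(n^{3p/2}k^{-3p/2}\). Then:
    \begin{itemize}
      \item the sum over \(k\le n/16\) is harmonic exactly at \(p=2/3\), giving \(n\log n\);
      \item for \(p>2/3\) it is dominated by \(k=1\), giving \(n^{3p/2}\);
      \item for \(p<2/3\) it is dominated by \(k\sim n\), giving \(n\). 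This is consistent with the first bullet.
    \end{itemize}
\end{itemize}

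\textbf{Main obstacle.} The main obstacle is the width estimate \(D_k(T_n)\gtrsim n^{3/2}k^{-1/2}\) for arbitrary rank-\(k\) perturbations; its upper bound follows from a block construction. For the lower bound I would first partition the columns into about \(2k\) groups of \(n/(2k)\) consecutive columns. In each group, a rank-\(<k\) matrix cannot approximate the staircase on more than \(k\) of the \(2k\) groups. I would make this precise by a singular-value/volume argument: compare \(\sum_j\lVert (T-B)e_j\rVert_2\ge \lVert T-B\rVert_{\mathrm F}^2/\max_j\lVert (T-B)e_j\rVert_2\). Combined with the known singular values \(\sigma_k(T_n)\asymp n/k\), this gives \(\lVert T-B\rVert_{\mathrm F}^2\ge\sum_{m\ge k}\sigma_m^2\asymp n^2/k\). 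A column maximum of size \(\sqrt n\) would then give \(n^{3/2}/k\), which is too weak by \(\sqrt k\). So I would instead apply this Frobenius argument blockwise on \(k\) disjoint diagonal sub-staircases of size \(n/k\), each forced to carry residual \(\gtrsim (n/k)^{3/2}\) width after the rank budget is shared. Whether the budget can be shared so cleanly is exactly where I expect difficulty.
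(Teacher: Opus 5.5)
\textbf{Verdict: there is a genuine gap.} Your upper bound is the paper's Fenwick computation and is correct. The interval ending at $q$ is used exactly by the prefixes $i\in\{q,\dots,q+\ell_q-1\}$, so $\lVert u_q\rVert_2\le\sqrt{\ell_q}$, and your level sum $\sum_\ell n2^{-\ell}2^{3p\ell/2}$ is right. The gap is in the lower bound, in the width estimate $D_k(T_n)\gtrsim n^{3/2}k^{-1/2}$ (for $n\ge16$, $1\le k\le n/16$), which you leave open.

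\textbf{The blockwise plan does not work as stated.} Restricting a $k$-dimensional $E$ to $k$ disjoint diagonal blocks does not divide the rank. Each restriction can still have dimension $k$, or fill the whole block once $k^2\ge n$. So even the best per-block bound, $D_k(T_{n/k})\gtrsim (n/k)^{3/2}k^{-1/2}$, summed over $k$ blocks returns $n^{3/2}/k$, which you already rejected. The free link $\dist(t_j,E)_2\ge\dist(\Pi_b t_j,\Pi_b E)_2$ is exactly the one that forgets any shared budget.

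Your heuristic that a $k$-plane cannot serve more than $k$ of $2k$ groups is the right one. What is actually shared is the Bessel (trace) budget: $\sum_\ell\lVert P_E y_\ell\rVert_2^2\le k$ for orthonormal $y_\ell$. The paper exploits it with a threshold argument, which also avoids converting squared distances into an $\ell_1$ sum (Remark~\ref{rem:rms}):
\begin{itemize}
\item Suppose at least $n/2$ columns satisfy $\dist(t_j,E)_2<\theta$.
\item Among those indices pick $2k$ pairs $\sigma_\ell<\tau_\ell$ cutting out disjoint intervals of length at least $s\ge n/(16k)$.
\item Then $\mathbf 1_{[\sigma_\ell,\tau_\ell)}=t_{\sigma_\ell}-t_{\tau_\ell}$ lies within $2\theta$ of $E$, and the normalized interval vectors $y_\ell$ are orthonormal.
\item Bessel gives
\[
k\le\sum_{\ell<2k}\dist(y_\ell,E)_2^2<\frac{8k\theta^2}{s},
\]
which is impossible for $\theta=\sqrt{n/(128k)}$.
\item Hence more than half the columns are at distance at least $\sqrt{n/(128k)}$, and $D_k(T_n)\ge n^{3/2}/(16\sqrt{2k})$ follows at once.
\end{itemize}
Taking differences of suffix indicators whose endpoints are \emph{close} columns localizes to disjoint blocks while staying tied to the actual $t_j$. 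That is the idea your sketch is missing.

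\textbf{The conversion step is guessed, not derived.} You try weights and then choose the normalization that produces $n^{3p/2}k^{-3p/2}$. The correct inequality is $\nucpow_p(A)\ge C_p^{-1}\sum_{k=1}^{\rank(A)-1}k^{-p}D_k(A)^p$ (Theorem~\ref{thm:width-profile}). In your first guess the smoothness exponent has the wrong sign: $\varrho=1/p-1$, and $k^{-1}(k^{1/p-1}D_k)^p=k^{-p}D_k^p$. The inequality follows from two steps you would need to supply:
\begin{itemize}
\item The tail bound $\sum_{q>k}\lambda_q\ge D_k(A)$. Sort atoms by $\lambda_q$, take $E=\operatorname{span}\{u_1,\dots,u_k\}$, and norm the signed residual columnwise.
\item The reverse Hardy inequality $\sum_k k^{-p}\bigl(\sum_{q>k}\lambda_q\bigr)^p\le C_p\sum_q\lambda_q^p$.
\end{itemize}

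\textbf{Your separate argument for the bound $n$ is false.} The inequality $\nucpow_p(A)\ge\sum_j\nucpow_p(Ae_j)$ fails because splitting an atom across columns increases its cost: $\bigl(\sum_j|v(j)|\bigr)^p\le\sum_j|v(j)|^p$. Since a single column has $\nucpow_p(t_j)=\lVert t_j\rVert_2^p$, the inequality would force $\nucpow_p(T_n)\gtrsim n^{1+p/2}$. That contradicts your own Fenwick upper bound for every $p<1$. The diagonal/trace argument yields only $\sum_q\lambda_q\ge n$, hence $n^p$.

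Drop that bullet. Once the width estimate and the conversion are in place, the width sum already gives order $n$ for $p<2/3$. The finitely many small $n$ are covered by the unit-entry bound $\nucpow_p(T_n)\ge1$.
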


The subscript in \(\Theta_p\) is not decorative: the implied constants are not
uniform in \(p\).  The sums producing the three orders in
\eqref{eq:phase-intro} carry a factor of order \(|1-3p/2|^{-1}\) and so blow up
as \(p\to2/3\) from either side, and the admissible constant
\(C_p=3/(1-2^{-(1-p)})\) of Theorem~\ref{thm:width-profile} diverges as
\(p\to1\).  Only \(p\to0\) is benign, where \(C_p\to6\).

At the critical value \(p=2/3\), Hölder's inequality converts the
\(\Omega(n\log n)\) nuclear power into
\[
  \frac{\lVert L\rVert_{\mathrm F}}{\sqrt n}
  \lVert R\rVert_{1\to1}
  =
  \Omega((\log n)^{3/2})
\]
for every real factorization \(T_n=LR\).  Since the normalized Frobenius row
energy is no larger than the maximum row energy, the same lower bound applies
to \(\ctwo\).

\subsection{The baseline that \texorpdfstring{\eqref{eq:main-intro}}{(1.4)}
improves on}
\label{sec:baseline}

It is worth recording what part of \eqref{eq:main-intro} is already classical,
so that the increment is visible.  For every matrix \(C\) one has
\(\lVert C\rVert_{1\to2}\leq\lVert C\rVert_{1\to1}\), since
\(\lVert c\rVert_2\leq\lVert c\rVert_1\) columnwise.  Hence every pair
\((L,R)\) admissible in \eqref{eq:cf-def}--\eqref{eq:c2-def} is admissible, at
a no-larger product, for the Euclidean-sensitivity
costs~\cite[eq.~(48)]{arkhipov-kalinin}
\begin{equation}
  \gtwo(M)=\inf_{M=LR}\lVert L\rVert_{2\to\infty}\lVert R\rVert_{1\to2},
  \qquad
  \gfrob(M)=\inf_{M=LR}\frac{\lVert L\rVert_{\mathrm F}}{\sqrt n}
  \lVert R\rVert_{1\to2},
  \label{eq:gamma-def}
\end{equation}
so that
\begin{equation}
  \gtwo(T_n)\leq\ctwo(T_n),
  \qquad
  \gfrob(T_n)\leq\cfrob(T_n).
  \label{eq:gamma-le-c}
\end{equation}
Both \(\gamma\)-quantities for \(T_n\) are known at order \(\log n\), with the
leading constant~\cite{fichtenberger,henzinger-uu,henzinger-ku}.  Thus
\eqref{eq:gamma-le-c} directly gives
\(\cfrob(T_n),\ctwo(T_n)=\Omega(\log(n+1))\).  A distinct lower-bound route
applies beyond matrix mechanisms: Theorem~4 of Henzinger, Upadhyay and
Upadhyay~\cite{henzinger-uu} covers \(\eps>0\) and
\(0\leq\delta<c/(2e^{\eps})\) for an absolute \(c>0\).  In particular, its
pure-DP specialization \(\delta=0\), with \(0<\eps\leq1\), gives
\(\MeanSE=\Omega(\eps^{-2}\log^2 n)\) for every continual counting mechanism;
\(\MaxSE\geq\MeanSE\) gives the same lower order for the maximum-coordinate
criterion.  These two routes are consistent, but they are not equivalent.

So the \(\Omega(\log n)\) part of \eqref{eq:main-intro} is not at issue and is
not claimed here.  What Theorem~\ref{thm:main-intro} contributes over that
baseline is exactly the additional factor \(\sqrt{\log n}\), and
\eqref{eq:gamma-le-c} localizes where it comes from: the mixed-norm gap
between \(\lVert R\rVert_{1\to1}\) and \(\lVert R\rVert_{1\to2}\).

\begin{remark}[Every normalized matrix norm is capped at
\texorpdfstring{\(\log n\)}{log n}]
\label{rem:convex-barrier}
The \(\Omega(\log n)\) ceiling in \eqref{eq:gamma-le-c} is not an artifact of
the two particular quantities \(\gtwo,\gfrob\).  Arkhipov and
Kalinin~\cite[Theorem~5.4]{arkhipov-kalinin} prove that every matrix norm
\(\gamma\) normalized by \(\gamma(st^{\mathsf T})=1\) for all
\(s,t\in\{-1,1\}^{n}\) satisfies
\(\gamma(T_n)\leq\lceil\log_2n\rceil+1\).  Any such norm used as a lower-bound
proxy for \eqref{eq:cf-def}--\eqref{eq:c2-def} is therefore capped at order
\(\log n\) and cannot witness the exponent \(3/2\).  Consistently, the same
authors show that \(\cfrob\) and \(\ctwo\) are themselves not norms: both fail
the triangle inequality~\cite[\S5]{arkhipov-kalinin}.  The lower bound proved
below accordingly runs through a quasi-normed scale --- the \(p\)-nuclear
powers \eqref{eq:nuclear-intro} at \(p=2/3<1\) --- on which no such
normalization is available, and the harmonic divergence responsible for the
extra \(\sqrt{\log n}\) occurs only because \(p<1\).
\end{remark}

\subsection{Proof architecture}

The lower bound has three steps and the upper bound one.  First, a
\(k\)-dimensional subspace cannot be close to more than half of the nested
suffix indicators: otherwise well-separated close endpoints would produce
\(2k\) disjoint long interval vectors close to the same \(k\)-plane,
contradicting Bessel's inequality.  This gives
\(D_k(T_n)\gtrsim n^{3/2}k^{-1/2}\) in the low-rank range \(1\leq k\leq n/16\)
(Appendix~\ref{sec:width}).  Second, the
classical approximation-space conversion turns \(p\)-summable rank-one
coefficients into a weighted nuclear-norm approximation profile, and for maps
\(\ell_\infty^n\to\ell_2^m\) those approximation numbers are exactly the
widths \(D_k(A)\); a direct proof is retained for its explicit constant,
removing the \(k\) largest atoms to leave mass at least \(D_k(A)\) and
accumulating the tails by a reverse Hardy inequality
(Appendix~\ref{sec:nuclear}).  Third, at \(p=2/3\) every rank scale contributes
order \(n\), producing the harmonic factor, and Hölder's inequality transfers
the critical \(p\)-nuclear bound to both costs (Appendix~\ref{sec:lower}).  A
Fenwick interval factorization then supplies matching upper bounds for both
costs and for the whole fixed-\(p\) phase diagram (Appendix~\ref{sec:upper}).

\subsection{Why these criteria and this class}
\label{sec:why}

Arkhipov and Kalinin record that under pure DP the strongest lower bounds known
for \emph{both} mean and maximum squared error are of order
\(\Omega(\eps^{-2}\log^2n)\), against an \(O(\eps^{-2}\log^3n)\) upper bound, so
that the asymptotic gap was open in both; and that ``the lack of asymptotically
better mechanisms has motivated the view that the current upper bound may be
optimal''~\cite[\S1]{arkhipov-kalinin}, a view they attribute
to~\cite{andersson-elders}.  Corollary~\ref{cor:error-intro} closes that gap
inside the pure-\(\eps\)-DP matrix-mechanism class, and closes it in the
direction that view anticipated.

The two criteria are the ones this class itself produces.
Proposition~\ref{prop:error} shows that a fixed factorization determines
exactly \eqref{eq:maxse-def} and \eqref{eq:meanse-def}, and the mean criterion
is the one Bairaktari and Larsen identify as the relevant measure for learning
applications, reserving \(\ell_\infty\) for monitoring
tasks~\cite[\S1]{bairaktari-larsen}.  They are also the two smaller quantities:
writing \(e=\mathcal M(x)-T_nx\),
\[
  \MeanSE
  \leq
  \MaxSE
  \leq
  \sup_x\mathbb E\bigl[\max_i e_i^2\bigr],
\]
so at a fixed order a lower bound on either of the first two is the stronger
statement.  The class, in turn, is where the quantitative work on this problem
is done: the leading-constant program
of~\cite{fichtenberger,henzinger-ku,andersson-elders,arkhipov-kalinin} lies
entirely inside it, and \eqref{eq:cf-def}--\eqref{eq:c2-def} are defined only
there.

The restriction to that class is real, and this framing does not conceal it.
Under pure DP the expected-\(\ell_\infty\) question remains open by a
\(\sqrt{\log n}\) factor over all mechanisms, between
\(\Omega(\eps^{-1}\log^{3/2}n)\) and
\(O(\eps^{-1}\log^{2}n)\)~\cite[Table~1]{bairaktari-larsen}, whereas
\eqref{eq:maxse-def}--\eqref{eq:meanse-def} are determined here on both sides.
Part of why the second question closes and the first does not is that the
second is asked of a smaller class of mechanisms.

\subsection{Claim boundary}

Theorem~\ref{thm:main-intro} is a result about the two costs in
\eqref{eq:cf-def}--\eqref{eq:c2-def}.  It is not a lower bound for every
continual-release mechanism.  It does not replace
\(\lVert R\rVert_{1\to1}\) by an approximate-DP sensitivity norm, and it says
nothing about learning accuracy, utility, or any dataset.  The qualitative
content of the generic width inequality is classical and is attributed in
Remark~\ref{rem:classical-conversion}; only its explicit constant and the
direct finite-dimensional proof are refinements here.  The matrix \(T_n\)
itself is classical, and is named as such in Section~\ref{sec:related}.  No
broader historical priority claim is made for the prefix-width estimate, the
fixed-\(p\) phase diagram, or the two factorization-cost bounds; the residual
limits are recorded in Section~\ref{sec:limitations}.

\section{Related work and positioning}
\label{sec:related}

\paragraph{Continual counting under pure differential privacy.}
The continual observation model, and with it the binary-tree mechanism for
counting, is due to Dwork, Naor, Pitassi and Rothblum~\cite{dwork-continual}
and, independently, Chan, Shi and Song~\cite{chan-shi-song}.  Both obtain
\(O(\eps^{-2}\log^3 n)\) squared error.  In the versioned line of work compared
here, that order remains the upper benchmark and subsequent improvements
optimize constants.  Before the general
factorization of Arkhipov and Kalinin, the strongest leading constant among
tree mechanisms was achieved by the \(k\)-ary construction with the
subtraction trick of Andersson, Pagh, Steiner and
Torkamani~\cite{andersson-elders}.  Arkhipov and
Kalinin~\cite{arkhipov-kalinin} improve those leading constants with a general
matrix factorization: they define the costs \(\cfrob\) and \(\ctwo\) in
\eqref{eq:cf-def}--\eqref{eq:c2-def}, derive the Laplace-mechanism error
formulas reproduced in Proposition~\ref{prop:error}, and lift a numerically
optimized low-dimensional factorization to all \(n\) by an explicit recursion.
The upper proof in the present paper uses a classical Fenwick factorization
rather than their optimized stacking construction; its role here is to make
the order comparison and the \(p\)-nuclear upper bound self-contained, not to
compete on constants.

\paragraph{Lower bounds.}
Building on the general factorization lower-bound framework of Edmonds,
Nikolov and Ullman~\cite{edmonds-nikolov-ullman}, Henzinger, Upadhyay and
Upadhyay~\cite[Theorem~4]{henzinger-uu} prove a
mean-squared-error lower bound for every \((\eps,\delta)\)-differentially
private continual counting mechanism when \(\eps>0\) and
\(0\leq\delta<c/(2e^{\eps})\) for an absolute \(c>0\).  At
\(\delta=0\) and \(0<\eps\leq1\), it is
\(\Omega(\eps^{-2}\log^2 n)\); the inequality
\(\MaxSE\geq\MeanSE\) transfers that order to the maximum-coordinate
criterion.  The resulting logarithmic gap to the
\(O(\eps^{-2}\log^3 n)\) pure-DP upper bound remains open for general
mechanisms under the versioned comparison in
\cite{arkhipov-kalinin}.  Within the matrix-mechanism class, Arkhipov and
Kalinin~\cite{arkhipov-kalinin} prove an
\(\Omega(\eps^{-2}\log^3 n)\) lower bound for factorizations whose factors have
entries in \(\{0,1\}\) --- a class containing the binary tree mechanism and the
\(k\)-ary tree mechanisms without the subtraction trick --- via a
skew-Bollobás set-system argument, and state the extension to arbitrary
factorizations as open.  Two further observations of theirs bound that binary
class from inside it.  Their \(0/1\) lower bound for \(\MaxSE\) carries
leading constant \(0.168\), larger than the constant \(0.0778\) achieved by
their own general factorization, so improving even the leading constant
requires leaving the binary class; and their Theorem~5.4 caps every normalized
matrix norm on \(T_n\) at order \(\log n\)
(Remark~\ref{rem:convex-barrier}), which closes the convex route to the
exponent rather than merely leaving it unexplored.  Under their contract,
Theorem~\ref{thm:main-intro} proves the corresponding order for arbitrary real
factors; it says nothing about the second half of their open problem, which
asks for a lower bound beyond the matrix mechanism altogether.

Concurrently with~\cite{arkhipov-kalinin}, Bairaktari and
Larsen~\cite[Theorem~1]{bairaktari-larsen} prove that every
\((\eps,\delta)\)-differentially private continual counting mechanism has
\[
  \max_{x\in\{0,1\}^n}\mathbb E\bigl[\lVert\mathcal M(x)-T_nx\rVert_\infty\bigr]
  =
  \Omega\!\left(\frac{\log^{3/2}n}{\max\{\eps,\delta\}}\right)
\]
for \(n^{-1+\Omega(1)}<\eps<1\) and \(0<\delta<C\) with \(C>0\) an absolute
constant.  Since a pure \(\eps\)-DP mechanism is \((\eps,\delta)\)-DP for every
\(\delta>0\), taking \(\delta\) below \(\min\{C,\eps\}\) gives
\(\Omega(\eps^{-1}\log^{3/2}n)\) for pure DP as well, as those authors
record; the bound holds for every mechanism and not only for matrix
mechanisms.  It does not subsume
Theorem~\ref{thm:main-intro}, and the reason is the error functional rather
than the mechanism class: their quantity takes the expectation \emph{after}
maximizing across coordinates, whereas \eqref{eq:maxse-def} and
\eqref{eq:meanse-def} average squared coordinate errors.  These functionals
are not ordered in general, so a lower bound on the former does not transfer
to either squared-error criterion.  The gap is genuine and not merely an
artifact of proof technique.  For \(n\) independent standard coordinates,
\(\mathbb E[\max_i|e_i|]^2\) is of order \(\log n\) while
\(\max_i\mathbb E[e_i^2]\) is of order \(1\).  In the other direction, already
in one dimension an error equal to \(L\) with probability \(L^{-2}\) and zero
otherwise has squared expected absolute error \(L^{-2}\) but mean squared
error \(1\).  The same non-transfer is recorded
by~\cite[\S1.2]{arkhipov-kalinin}, and
Remark~\ref{rem:expected-max} states the corresponding boundary here.  The two
results are complementary: theirs reaches beyond matrix mechanisms under an
expected-maximum criterion, while Theorem~\ref{thm:main-intro} determines the
arbitrary-real factorization costs that govern the coordinatewise and mean
squared errors.

\paragraph{Approximate-DP factorization norms.}
Replacing \(\ell_1\) by \(\ell_2\) sensitivity turns \(\cfrob,\ctwo\) into the
matrix norms \(\gfrob,\gtwo\) of \eqref{eq:gamma-def}.  For the counting matrix
these are settled.  Mathias's classical circulant estimate supplies the
longstanding upper bound for \(\gtwo(T_n)\)~\cite{mathias-circulant};
Fichtenberger, Henzinger and Upadhyay~\cite{fichtenberger} compute the leading
term with its constant through completely bounded norms, Henzinger, Upadhyay
and Upadhyay~\cite{henzinger-uu} give a matching singular-value lower bound
for \(\gfrob\), and Henzinger, Kalinin and Upadhyay~\cite{henzinger-ku}
sharpen the next term.  The general theory of these norms is developed by
Matoušek, Nikolov and Talwar~\cite{matousek-nikolov-talwar}.  Denisov et
al.~\cite{denisov-adaptive} also optimize matrix factorizations for adaptive
streams and private stochastic optimization; their setting uses
approximate-DP Euclidean sensitivity and is distinct from the pure-Laplace
objectives here.
Those results do not settle \eqref{eq:cf-def}--\eqref{eq:c2-def}: by
\eqref{eq:gamma-le-c} they give only the \(\Omega(\log n)\) baseline, and the
mixed-norm gap they discard is precisely what contributes the additional
square-root logarithm.

\paragraph{The prefix matrix in Banach space geometry.}
The matrix \eqref{eq:prefix-matrix} is not specific to differential privacy.
It is the coefficient matrix of the \emph{finite summation operator}
\(\Sigma_n\in\mathfrak L(\ell_1^n,\ell_\infty^n)\) of Pietsch and
Wenzel~\cite[\S0.7.3]{pietsch-wenzel}, one of the standard test operators of
that field: their \S7.6 uses it to characterise super weak compactness and
superreflexivity.  The two are the same array of scalars but not the same
normed-space operator: the \(p\)-nuclear functional
\eqref{eq:nuclear-intro} used below treats \(T_n\) as an operator
\(\ell_\infty^n\to\ell_2^n\) instead, and that realization is what fixes the
atom weight \(\lVert u_q\rVert_2\lVert v_q\rVert_1\).  Their summary table records twelve quantities for
\(\Sigma_n\): ten resolved asymptotic entries and two open rows.  Every entry
is an ideal norm relative to an orthonormal system; none is a nuclear
quantity, an approximation number, a width, or a factorization cost, and the
question asked there --- whether the \(\Sigma_n\) factor \emph{through} one
fixed operator with uniformly bounded factor norms, pursued further by
Wenzel~\cite{wenzel} --- is a different one from the growth rate measured
here.  This is positioning, not priority: the object is classical, while none
of the quantities catalogued in the monograph is the nuclear-side functional
studied below.

\paragraph{Approximation spaces and \(p\)-nuclearity.}
Pietsch~\cite{pietsch-approximation} introduced approximation schemes, their
approximation numbers, the spaces obtained by weighting those numbers, and the
Transformation Theorem that maps one such space into another when sparse
objects are sent to rank-controlled objects.  Hinrichs and
Pietsch~\cite{hinrichs-pietsch} develop the theory of \(p\)-nuclear operators
in Grothendieck's sense and, in their \S7, apply that apparatus with the
nuclear operators as ambient space and the finite-rank operators as
approximating sets; their Theorem 7.1 is the inclusion this paper needs, and
Remark~\ref{rem:classical-conversion} records the specialization and the
parameter transfer.  Neither source states the prefix-width estimate, the
fixed-\(p\) evaluation for \(T_n\), or the two factorization-cost bounds:
their worked examples are identity maps and Fourier matrices.  Of the
surrounding literature, Kwapień and Pełczyński~\cite{kwapien-pelczynski}
prove logarithmic bounds for the main triangle projection on unconditional
matrix norms --- Banach-norm statements, whereas convexifying
\eqref{eq:cf-def}--\eqref{eq:c2-def} would discard the nonconvex scale
responsible for the exponent \(3/2\), a loss that
Remark~\ref{rem:convex-barrier} records as a barrier proved in the source
compared here rather than as a stylistic preference;
Laprestè~\cite{lapreste} and
Reinov~\cite{reinov} supply classical language for mixed \(\ell_2/\ell_1\)
factorizations and finite tensor quasi-norms; and Fewster, Ojima and
Porrmann~\cite{fewster-ojima-porrmann} use the countable Grothendieck
convention that Proposition~\ref{prop:finite-countable} reconciles with the
finite one.

\section{Preliminaries}
\label{sec:prelim}

Throughout, \(\log\) denotes the natural logarithm.  The asymptotic statements
are unaffected by the base, but the explicit constants are not: the harmonic
comparison \(\sum_{k\leq m}k^{-1}\geq\log(m+1)\) used in the proof of
Theorem~\ref{thm:critical} holds for the natural logarithm and fails in
base \(2\).  The one place where a binary logarithm appears, in the statement
of \cite[Theorem~5.4]{arkhipov-kalinin} quoted in
Remark~\ref{rem:convex-barrier}, is written \(\log_2\) explicitly.

\subsection{Norms and factorization costs}

For a real matrix \(M\), write
\begin{equation}
  \lVert M\rVert_{2\to\infty}
  =
  \max_i\lVert M_{i,\cdot}\rVert_2,
  \qquad
  \lVert M\rVert_{1\to1}
  =
  \max_j\lVert M_{\cdot,j}\rVert_1.
  \label{eq:operator-norms}
\end{equation}
The former is the largest Euclidean row norm, and the latter the largest
\(\ell_1\) column norm.  The inequality
\begin{equation}
  \frac{\lVert L\rVert_{\mathrm F}}{\sqrt n}
  \leq
  \lVert L\rVert_{2\to\infty}
  \label{eq:frob-row}
\end{equation}
implies \(\cfrob(T_n)\leq\ctwo(T_n)\).

\subsection{Differential privacy and the factorization mechanism}
\label{sec:privacy}

\begin{definition}[Adjacency and pure differential privacy]
\label{def:dp}
Streams \(x,x'\in\R^{n}\) are \emph{adjacent}, written \(x\sim x'\), when
\(\lVert x-x'\rVert_1\leq1\).  For \(\eps>0\), a randomized mechanism
\(\mathcal M\) with values in \(\R^{n}\) is
\emph{\(\eps\)-differentially private} if
\begin{equation}
  \Pr[\mathcal M(x)\in S]
  \leq
  e^{\eps}\Pr[\mathcal M(x')\in S]
  \label{eq:dp}
\end{equation}
for every pair \(x\sim x'\) and every measurable \(S\subseteq\R^{n}\).
\end{definition}

\begin{lemma}[\(\ell_1\) sensitivity of a factor]
\label{lem:sensitivity}
For every \(R\in\R^{r\times n}\),
\begin{equation}
  \sup_{\lVert d\rVert_1\leq1}\lVert Rd\rVert_1
  =
  \lVert R\rVert_{1\to1}.
  \label{eq:sensitivity}
\end{equation}
\end{lemma}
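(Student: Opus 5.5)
The equality \eqref{eq:sensitivity} is the standard fact that the \(\ell_1\to\ell_1\) operator norm of a matrix is its largest \(\ell_1\) column norm. I will prove it by two matching inequalities.

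\emph{Upper bound.} Fix \(d\in\R^n\) with \(\lVert d\rVert_1\leq1\) and expand \(Rd=\sum_{j=1}^n d_j R_{\cdot,j}\) as a combination of the columns of \(R\). The triangle inequality in \(\ell_1^r\) and absolute homogeneity give
\[
  \lVert Rd\rVert_1
  \leq
  \sum_{j=1}^n |d_j|\,\lVert R_{\cdot,j}\rVert_1
  \leq
  \Bigl(\max_j\lVert R_{\cdot,j}\rVert_1\Bigr)\lVert d\rVert_1
  \leq
  \lVert R\rVert_{1\to1},
\]
where \(\lVert R\rVert_{1\to1}\) is read as in \eqref{eq:operator-norms}. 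Taking the supremum over \(d\) gives one inequality.

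\emph{Lower bound.} Choose an index \(j^\ast\) at which \(\lVert R_{\cdot,j}\rVert_1\) is maximal; the maximum exists because \(n\) is finite. Take \(d=e_{j^\ast}\), which has \(\lVert d\rVert_1=1\). Then \(Rd=R_{\cdot,j^\ast}\), so \(\lVert Rd\rVert_1=\lVert R\rVert_{1\to1}\). Hence the supremum is at least, and in fact attains, this value.

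Combining the two bounds yields \eqref{eq:sensitivity}. No step is a genuine obstacle. The one point worth stating explicitly is the link to Definition~\ref{def:dp}: adjacency there means \(\lVert x-x'\rVert_1\leq1\), so the sensitivity \(\sup_{x\sim x'}\lVert Rx-Rx'\rVert_1\) is exactly the left side of \eqref{eq:sensitivity} with \(d=x-x'\). Indeed, every \(d\) in the unit \(\ell_1\) ball arises from the adjacent pair \(x'=0\), \(x=d\). This identification is what Proposition~\ref{prop:privacy} will use.
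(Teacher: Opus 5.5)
Your proof is correct and is essentially the paper's own argument: you expand \(Rd\) in the columns of \(R\) and apply the triangle inequality for the upper bound, then take \(d=e_{j^\ast}\) at a maximizing column to attain it. Your closing remark about \(x'=0\), \(x=d\) is also correct for the stream domain \(\R^n\) of Definition~\ref{def:dp}, which is how Proposition~\ref{prop:privacy} uses the lemma.
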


\begin{proof}
Writing \(R_{\cdot,j}\) for the \(j\)-th column,
\[
  \lVert Rd\rVert_1
  =
  \Bigl\lVert\sum_{j=1}^{n}d_jR_{\cdot,j}\Bigr\rVert_1
  \leq
  \sum_{j=1}^{n}|d_j|\,\lVert R_{\cdot,j}\rVert_1
  \leq
  \lVert d\rVert_1\lVert R\rVert_{1\to1},
\]
which gives \(\leq\).  Equality holds at \(d=e_{j^{*}}\) for any column
\(j^{*}\) attaining the maximum in \eqref{eq:operator-norms}.
\end{proof}

For a factorization \(T_n=LR\) with \(R\in\R^{r\times n}\), define the additive
Laplace matrix mechanism
\begin{equation}
  \mathcal M_{L,R}(x)=L(Rx+Z),
  \label{eq:mechanism}
\end{equation}
where the coordinates of \(Z\in\R^{r}\) are independent centered Laplace
variables of scale \(b=\lVert R\rVert_{1\to1}/\eps\).  Since \(LR=T_n\neq0\) we
have \(R\neq0\), so \(b>0\).

\begin{proposition}[The factorization mechanism is \(\eps\)-DP]
\label{prop:privacy}
For every \(\eps>0\) and every factorization \(T_n=LR\), the mechanism
\eqref{eq:mechanism} is \(\eps\)-differentially private in the sense of
Definition~\ref{def:dp}.
\end{proposition}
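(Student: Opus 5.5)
The plan is the standard Laplace-mechanism argument combined with post-processing. Write \(\mathcal M_{L,R}=L\circ\mathcal A\), where \(\mathcal A(x)=Rx+Z\in\R^{r}\). It suffices to show that \(\mathcal A\) is \(\eps\)-differentially private. For any measurable \(S\subseteq\R^{n}\), the preimage \(L^{-1}(S)\subseteq\R^{r}\) is measurable because \(L\) is linear and hence continuous. Then \(\Pr[\mathcal M_{L,R}(x)\in S]=\Pr[\mathcal A(x)\in L^{-1}(S)]\), so \eqref{eq:dp} for \(\mathcal A\), applied to the set \(L^{-1}(S)\), gives \eqref{eq:dp} for \(\mathcal M_{L,R}\).

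For \(\mathcal A\), I will use densities. The law of \(\mathcal A(x)\) has Lebesgue density
\[
  f_x(z)=(2b)^{-r}\exp\bigl(-\lVert z-Rx\rVert_1/b\bigr),
\]
which is well defined since \(b>0\), as noted before the proposition. For adjacent \(x\sim x'\), set \(d=x-x'\), so \(\lVert d\rVert_1\le1\). The triangle inequality in \(\ell_1^r\) gives
\[
  \frac{f_x(z)}{f_{x'}(z)}
  =\exp\Bigl(\frac{\lVert z-Rx'\rVert_1-\lVert z-Rx\rVert_1}{b}\Bigr)
  \le\exp\bigl(\lVert Rd\rVert_1/b\bigr).
\]
By Lemma~\ref{lem:sensitivity}, \(\lVert Rd\rVert_1\le\lVert d\rVert_1\lVert R\rVert_{1\to1}\le\lVert R\rVert_{1\to1}=\eps b\). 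Hence \(f_x\le e^{\eps}f_{x'}\) pointwise. Integrating this inequality over any measurable \(T\subseteq\R^{r}\) gives \(\Pr[\mathcal A(x)\in T]\le e^{\eps}\Pr[\mathcal A(x')\in T]\).

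No step is a real obstacle. The one point that needs care is that \(\mathcal M_{L,R}(x)\) itself need not have a density on \(\R^{n}\), for instance when \(\rank L<n\) or when \(r<n\). That is why the density comparison is made in \(\R^{r}\) and then pushed forward through \(L\), rather than comparing output densities directly. Nothing about the sign pattern, sparsity, or inner dimension of \(L\) and \(R\) enters. This is consistent with the arbitrary-real contract of \eqref{eq:cf-def}--\eqref{eq:c2-def}.
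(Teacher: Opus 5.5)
Your proof is correct and follows the paper's own argument: bound the pointwise Laplace density ratio on \(\R^{r}\) by \(e^{\eps}\) using the triangle inequality, Lemma~\ref{lem:sensitivity} and the choice \(b=\lVert R\rVert_{1\to1}/\eps\), integrate over measurable sets, and then apply the fixed map \(L\). Your explicit preimage step for post-processing, and your remark that \(\mathcal M_{L,R}(x)\) need not have a density on \(\R^{n}\), spell out what the paper covers in its final sentence.
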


\begin{proof}
Let \(x\sim x'\) and put \(d=x-x'\), so \(\lVert d\rVert_1\leq1\).  The random
vector \(Rx+Z\) has density
\(w\mapsto(2b)^{-r}\prod_{i=1}^{r}\exp(-|w_i-(Rx)_i|/b)\) on \(\R^{r}\), so for
every \(w\) the ratio of the densities at \(x\) and at \(x'\) is at most
\[
  \exp\!\left(\frac{\lVert Rx-Rx'\rVert_1}{b}\right)
  \leq
  \exp\!\left(\frac{\lVert R\rVert_{1\to1}\lVert d\rVert_1}{b}\right)
  \leq
  e^{\eps},
\]
using the triangle inequality, Lemma~\ref{lem:sensitivity} and the choice of
\(b\).  Integrating over any measurable set shows that \(x\mapsto Rx+Z\)
satisfies \eqref{eq:dp}; this is the Laplace
mechanism~\cite{dwork-calibrating} applied at sensitivity
\(\lVert R\rVert_{1\to1}\).  The map \(w\mapsto Lw\) is fixed and does not
depend on the stream, so composing with it preserves the guarantee.
\end{proof}

\subsection{Fixed-factorization and optimized errors}

For any randomized release \(\mathcal M\), write
\begin{align}
  \MaxSE(\mathcal M,n)
  &=
  \sup_{x}\max_{1\leq i\leq n}
  \mathbb E\!\left[
    \bigl(\mathcal M(x)_i-(T_nx)_i\bigr)^2
  \right],
  \label{eq:maxse-def}\\
  \MeanSE(\mathcal M,n)
  &=
  \sup_x
  \frac1n
  \mathbb E\!\left[
    \lVert\mathcal M(x)-T_nx\rVert_2^2
  \right].
  \label{eq:meanse-def}
\end{align}
The suprema are over the stream domain of the mechanism, which by
Definition~\ref{def:dp} is \(\R^{n}\) here; a supremum rather than a maximum
is used because on an unbounded domain a general mechanism need not attain
one.  For the additive mechanism in \eqref{eq:mechanism} the error
distribution does not depend on \(x\), so both suprema are attained and equal
the constant value.

\begin{remark}[Stream domain relative to the compared contract]
\label{rem:stream-domain}
Arkhipov and Kalinin state continual counting on binary streams
\(x\in\{0,1\}^{n}\), with the same unit \(\ell_1\) adjacency
\cite[\S1.1]{arkhipov-kalinin}.  Definition~\ref{def:dp} uses the ambient
formulation \(x\in\R^{n}\) instead, which enlarges the adjacency relation and
therefore imposes the stronger privacy requirement.  Neither the algebraic
costs nor the errors change.  By Lemma~\ref{lem:sensitivity},
\(\lVert R\rVert_{1\to1}\) is the largest \(\ell_1\) column norm and is already
attained at \(d=\pm e_j\) --- exactly the differences of adjacent binary
streams --- so the calibrated noise scale in \eqref{eq:mechanism} is the same
under either domain, and by the previous paragraph so are
\eqref{eq:maxse-def}--\eqref{eq:meanse-def}.  The factorization problem is
therefore identical, and the lower bounds proved below are statements about
\eqref{eq:cf-def}--\eqref{eq:c2-def} alone, in which no stream domain appears.
\end{remark}

\begin{proposition}[Exact squared-error formulas]
\label{prop:error}
For \(\eps>0\) and a fixed factorization \(T_n=LR\),
\begin{align}
  \MaxSE(\mathcal M_{L,R},n)
  &=
  \frac{2}{\eps^2}
  \lVert L\rVert_{2\to\infty}^{2}
  \lVert R\rVert_{1\to1}^{2},
  \label{eq:maxse-fixed}\\
  \MeanSE(\mathcal M_{L,R},n)
  &=
  \frac{2}{n\eps^2}
  \lVert L\rVert_{\mathrm F}^{2}
  \lVert R\rVert_{1\to1}^{2}.
  \label{eq:meanse-fixed}
\end{align}
Consequently, the infima of the two errors over all finite real
factorizations are
\begin{equation}
  \frac{2\ctwo(T_n)^2}{\eps^2}
  \quad\text{and}\quad
  \frac{2\cfrob(T_n)^2}{\eps^2},
  \label{eq:optimized-errors}
\end{equation}
respectively.
\end{proposition}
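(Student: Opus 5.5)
We compute the error of \(\mathcal M_{L,R}\) directly from \eqref{eq:mechanism}. Since \(T_n=LR\), the error vector is
\[
  e=\mathcal M_{L,R}(x)-T_nx=L(Rx+Z)-LRx=LZ.
\]
It does not depend on \(x\), so both suprema in \eqref{eq:maxse-def}--\eqref{eq:meanse-def} reduce to evaluating a single constant. The coordinates \(Z_1,\dots,Z_r\) are independent, centered, and Laplace of scale \(b\). Each therefore has variance \(2b^2\), and \(\mathbb E[ZZ^{\mathsf T}]=2b^2 I_r\). This gives
\[
  \mathbb E[e_i^2]=L_{i,\cdot}\,\mathbb E[ZZ^{\mathsf T}]\,L_{i,\cdot}^{\mathsf T}=2b^2\lVert L_{i,\cdot}\rVert_2^2 .
\]
Taking the maximum over \(i\) and using \eqref{eq:operator-norms} with \(b=\lVert R\rVert_{1\to1}/\eps\) yields \eqref{eq:maxse-fixed}. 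Summing over \(i\) instead gives \(\mathbb E\lVert e\rVert_2^2=2b^2\lVert L\rVert_{\mathrm F}^2\), and dividing by \(n\) yields \eqref{eq:meanse-fixed}. The only fact needed beyond linearity of expectation is the Laplace variance \(\int_{\R}t^2(2b)^{-1}e^{-|t|/b}\,dt=2b^2\), which is a routine integral.

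For \eqref{eq:optimized-errors}, note that the maps \(t\mapsto 2t^2/\eps^2\) are continuous and increasing on \([0,\infty)\). Hence the infimum over factorizations \(T_n=LR\) of \(\frac{2}{\eps^2}\bigl(\lVert L\rVert_{2\to\infty}\lVert R\rVert_{1\to1}\bigr)^2\) equals \(\frac{2}{\eps^2}\) times the square of the infimum in \eqref{eq:c2-def}, which is \(2\ctwo(T_n)^2/\eps^2\). The same argument applied to \(\bigl(\lVert L\rVert_{\mathrm F}\lVert R\rVert_{1\to1}/\sqrt n\bigr)^2\) gives \(2\cfrob(T_n)^2/\eps^2\). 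The infimum in both the error and the cost ranges over the same set: all finite real factorizations with \(R\neq0\), which holds automatically. So nothing beyond monotonicity needs to be checked.

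No step here is a real obstacle. The one point needing care is that the scale \(b\) itself depends on \(R\), so the factor \(\lVert R\rVert_{1\to1}^2\) enters the formulas through \(b^2\) rather than through \(L\). Making this explicit is what turns the fixed-factorization formulas into exactly the products appearing in \eqref{eq:cf-def}--\eqref{eq:c2-def}.
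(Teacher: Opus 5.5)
Your proof is correct and follows the paper's argument essentially step for step. The error is \(LZ\), independent of \(x\), each coordinate has second moment \(2b^2\lVert L_{i,\cdot}\rVert_2^2\), and maximizing or averaging over rows gives the two formulas, while your explicit monotonicity remark for passing the infimum through \(t\mapsto 2t^2/\eps^2\) simply spells out what the paper leaves implicit in ``taking the corresponding infima.''
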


\begin{proof}
The additive error is \(LZ\).  A centered Laplace variable of scale \(b\)
has variance \(2b^2\).  Independence and zero means therefore give, for row
\(i\) of \(L\),
\[
  \mathbb E[(LZ)_i^2]
  =
  \frac{2\lVert R\rVert_{1\to1}^{2}}{\eps^2}
  \lVert L_{i,\cdot}\rVert_2^2.
\]
Maximizing over \(i\) gives \eqref{eq:maxse-fixed}; averaging over the \(n\)
rows gives \eqref{eq:meanse-fixed}.  Taking the corresponding infima and
using \eqref{eq:cf-def}--\eqref{eq:c2-def} proves
\eqref{eq:optimized-errors}.
\end{proof}

\begin{remark}[The two optimized errors need not be equal]
\label{rem:two-step-errors}
Equation~\eqref{eq:optimized-errors} identifies two separate infima; their
common asymptotic order does not make their finite values equal.  For the
two-step prefix matrix, Arkhipov and Kalinin prove
\(\ctwo(T_2)=\sqrt2\) and
\(\cfrob(T_2)=\sqrt{3/2}\)~\cite[Lemmas~5.2--5.3]{arkhipov-kalinin}.  Hence
\[
  \inf_{T_2=LR}\MaxSE(\mathcal M_{L,R},2)=\frac4{\eps^2},
  \qquad
  \inf_{T_2=LR}\MeanSE(\mathcal M_{L,R},2)=\frac3{\eps^2}.
\]
Corollaries~\ref{cor:error-intro} and~\ref{cor:error-order} assert only that
the two sequences have the same order as \(n\) grows.
\end{remark}

\begin{remark}[What these errors do not measure]
\label{rem:expected-max}
The quantity in \eqref{eq:maxse-fixed} is
\(\max_i\mathbb E[(LZ)_i^2]\).  It is not
\(\mathbb E[\max_i(LZ)_i^2]\), nor does it determine
\(\mathbb E[\max_i|(LZ)_i|]\).  The coordinates of \(LZ\) are generally
correlated, and bounds for either expected maximum require a separate
probabilistic argument.  The last quantity is the criterion of
\cite{bairaktari-larsen}; the two directions are compared in
Section~\ref{sec:related}.
\end{remark}

\subsection{\texorpdfstring{\(p\)}{p}-nuclear power}

\begin{definition}[Finite \(p\)-nuclear power]
\label{def:nuclear}
For \(A\in\R^{m\times n}\) and \(0<p<1\), define
\begin{equation}
  \nucpow_p(A)
  =
  \inf_{A=\sum_{q=1}^{r}u_qv_q^{\mathsf T}}
  \sum_{q=1}^{r}
  \lambda_q^p,
  \qquad
  \lambda_q=\lVert u_q\rVert_2\lVert v_q\rVert_1,
  \label{eq:nuclear-def}
\end{equation}
where \(r<\infty\).
\end{definition}

The vectors \(u_q\in\R^m\) and \(v_q\in\R^n\) define rank-one operators from
\(\ell_\infty^n\) to \(\ell_2^m\), and \(\lambda_q\) is the corresponding
operator-norm product.

The countable convention is Grothendieck's, in the form stated by Hinrichs and
Pietsch~\cite[eq.~(1.3)]{hinrichs-pietsch}: for \(0<p\leq1\), an operator
\(T:X\to Y\) between Banach spaces is \emph{\(p\)-nuclear} when it admits a
representation
\begin{equation}
  Tx=\sum_{k\geq1}\tau_k\langle x,x_k^{*}\rangle y_k,
  \qquad
  (x_k^{*})\subset B_{X^{*}},
  \quad
  (y_k)\subset B_{Y},
  \quad
  (\tau_k)\in\ell_p,
  \label{eq:grothendieck}
\end{equation}
and one sets \(\nu_p(T)=\inf\lVert(\tau_k)\rVert_{\ell_p}\) over all such
representations.  Taking \(X=\ell_\infty^n\), so that \(B_{X^{*}}\) is the
\(\ell_1\) ball, and \(Y=\ell_2^m\), a rank-one atom \(u_qv_q^{\mathsf T}\)
normalizes in both directions to
\(\tau_q=\lVert u_q\rVert_2\lVert v_q\rVert_1=\lambda_q\).  So
\(\nu_p(A)^p\) is exactly the infimum in \eqref{eq:nuclear-def} taken over
\emph{countable} representations.  The next proposition says that the
restriction to finite ones costs nothing.

\begin{proposition}[Finite and countable conventions agree]
\label{prop:finite-countable}
For every finite real matrix \(A\) and every \(0<p<1\),
\begin{equation}
  \nucpow_p(A)=\nu_p(A)^p,
  \label{eq:finite-countable}
\end{equation}
where \(\nu_p\) is the Grothendieck quasi-norm of \eqref{eq:grothendieck} for
operators \(\ell_\infty^n\to\ell_2^m\).
\end{proposition}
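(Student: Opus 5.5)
The plan is to prove the two inequalities \(\nu_p(A)^p\leq\nucpow_p(A)\) and \(\nucpow_p(A)\leq\nu_p(A)^p\) separately.

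\emph{Finite representations are countable ones.} Given a finite representation \(A=\sum_{q\leq r}u_qv_q^{\mathsf T}\), discard the atoms with \(\lambda_q=0\), since they are zero matrices. For the remaining atoms set
\[
  \tau_q=\lambda_q,\qquad
  x_q^{*}=\frac{v_q}{\lVert v_q\rVert_1},\qquad
  y_q=\frac{u_q}{\lVert u_q\rVert_2},
\]
and pad with zeros. This is an admissible representation in \eqref{eq:grothendieck} with \(\lVert(\tau_q)\rVert_p^p=\sum_q\lambda_q^p\). Taking the infimum gives \(\nu_p(A)^p\leq\nucpow_p(A)\).

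\emph{Countable representations can be truncated.} Start from a countable representation \(A=\sum_{k\geq1}\tau_k y_k x_k^{*\mathsf T}\) with \(\sum_k|\tau_k|^p=S\). We may take \(\tau_k\geq0\) by moving signs into \(y_k\). Since \(p<1\), we have \((\tau_k)\in\ell_p\subset\ell_1\), so the series converges absolutely in every norm on \(\R^{m\times n}\). Fix \(\eta>0\) and choose \(N\) such that the tail satisfies \(\sum_{k>N}\tau_k^p<\eta\). Put \(E=\sum_{k>N}\tau_k y_k x_k^{*\mathsf T}\).

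The main step, and the only delicate one, is to absorb \(E\) into finitely many atoms whose \(p\)-th powers sum to \(o(1)\) as \(\eta\to0\). Simply bounding \(\sum\lambda^p\) of the tail by \(\eta\) does not immediately give a finite representation. Here is the fix:
\begin{enumerate}
  \item Because \(p<1\) and every tail coefficient satisfies \(\tau_k\leq\eta^{1/p}\leq1\) once \(\eta<1\), we get \(\sum_{k>N}\tau_k\leq\sum_{k>N}\tau_k^p<\eta\).
  \item Write \(E=\sum_{j=1}^{n}E e_j e_j^{\mathsf T}\). This is a fixed number \(n\) of rank-one atoms with weights \(\lambda_j=\lVert Ee_j\rVert_2\cdot1\).
  \item Each column satisfies \(\lVert Ee_j\rVert_2\leq\sum_{k>N}\tau_k\lVert y_k\rVert_2\,|x_k^{*}(j)|\leq\sum_{k>N}\tau_k<\eta\), using \(\lVert x_k^{*}\rVert_\infty\leq\lVert x_k^{*}\rVert_1\leq1\).
\end{enumerate}
The resulting finite representation of \(A\) therefore has cost at most \(S+n\eta^p\). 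The point is that the number of extra atoms is bounded by the dimension, independently of the tail.

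Letting \(\eta\to0\) gives \(\nucpow_p(A)\leq S\). Taking the infimum over countable representations then yields \(\nucpow_p(A)\leq\nu_p(A)^p\), which completes \eqref{eq:finite-countable}.
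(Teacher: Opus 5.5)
Your proof is correct and follows the paper's route. One direction is the trivial inclusion of finite representations among countable ones. For the other, you truncate the absolutely convergent countable representation and split the residual into its \(n\) column atoms, whose total \(p\)-cost is at most \(n\) times a vanishing quantity. The only cosmetic difference is that you bound each residual column directly by \(\sum_{k>N}\tau_k<\eta\), whereas the paper bounds it by \(\lVert B_N\rVert_{\ell_\infty^n\to\ell_2^m}\to0\).
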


\begin{proof}
Every finite representation is countable, giving one inequality.  Conversely,
consider a countable representation
\begin{equation}
  A=\sum_{q\geq1}u_qv_q^{\mathsf T},
  \qquad
  \sum_{q\geq1}\lambda_q^p<\infty.
  \label{eq:countable-representation}
\end{equation}
Because \(0<p<1\), the latter summability implies
\(\sum_q\lambda_q<\infty\): after finitely many terms, \(\lambda_q\leq1\)
and hence \(\lambda_q\leq\lambda_q^p\).  Thus the partial sums converge to
\(A\) in the operator norm \(\ell_\infty^n\to\ell_2^m\).

Let \(B_N\) be the residual after \(N\) terms.  Its finite column
decomposition is
\begin{equation}
  B_N=\sum_{j=1}^{n}(B_Ne_j)e_j^{\mathsf T}.
  \label{eq:residual-columns}
\end{equation}
The additional \(p\)-cost is bounded by
\begin{equation}
  \sum_{j=1}^{n}\lVert B_Ne_j\rVert_2^p
  \leq
  n\lVert B_N\rVert_{\ell_\infty^n\to\ell_2^m}^{p}
  \longrightarrow0.
  \label{eq:residual-cost}
\end{equation}
Appending \eqref{eq:residual-columns} to the initial \(N\) atoms yields a
finite exact representation with asymptotically no extra cost.  Taking
infima proves the reverse inequality.
\end{proof}

\subsection{Aggregate column widths}

\begin{definition}[Aggregate column width]
\label{def:width}
For \(A\in\R^{m\times n}\), with columns \(a_j=Ae_j\), define
\begin{equation}
  D_k(A)
  =
  \inf_{\substack{E\subseteq\R^m\\\dim E\leq k}}
  \sum_{j=1}^{n}\dist(a_j,E)_2.
  \label{eq:width-def}
\end{equation}
Throughout, \(E\) ranges over linear subspaces of \(\R^m\) and
\begin{equation}
  \dist(a,E)_2=\inf_{e\in E}\lVert a-e\rVert_2
  \label{eq:dist-def}
\end{equation}
is the Euclidean distance from \(a\) to \(E\); the subscript records which norm
measures the distance, not a coordinate.
\end{definition}

Unlike a spectral approximation number, \(D_k(A)\) sums Euclidean distances
column by column.  This mixed aggregate geometry matches the \(\ell_1\)
coefficient norm in \eqref{eq:nuclear-def}.

\newcommand{\CandidateBProofAppendices}{%
\section{Aggregate width of the prefix chain}
\label{sec:width}

Write \(t_j=T_ne_j\).  Thus \(t_j\) is the indicator of the suffix
\(\{j,\ldots,n\}\).

\begin{lemma}[Dense endpoints force aggregate width]
\label{lem:width-lower}
If \(n\geq16\) and \(1\leq k\leq n/16\), then
\begin{equation}
  D_k(T_n)
  \geq
  \frac{1}{16\sqrt2}\frac{n^{3/2}}{\sqrt{k}}.
  \label{eq:width-lower}
\end{equation}
\end{lemma}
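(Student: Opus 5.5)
The plan is a counting argument at a single distance threshold. Fix a linear subspace \(E\subseteq\R^n\) with \(\dim E\leq k\), let \(P\) be the orthogonal projection onto \(E\), and set
\[
  \delta=\sqrt{\frac{n}{128k}} .
\]
Call an index \(j\) \emph{close} if \(\dist(t_j,E)_2<\delta\). I will show that at most \(n/2\) indices are close. Every one of the remaining at least \(n/2\) columns then contributes at least \(\delta\) to the sum in \eqref{eq:width-def}, so
\[
  \sum_j\dist(t_j,E)_2\;\geq\;\frac n2\,\delta\;=\;\frac{1}{16\sqrt2}\frac{n^{3/2}}{\sqrt k}.
\]
Taking the infimum over \(E\) then gives \eqref{eq:width-lower}.

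For the key claim, suppose instead that more than \(n/2\) indices are close. Put
\[
  g=\lfloor n/(16k)\rfloor+1 .
\]
Then \(g>n/(16k)=8\delta^2\). Since \(n/(16k)\geq1\), we also have \(g\leq n/(8k)\).

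\emph{Choosing well-separated close indices.} Select close indices greedily from the left: start with the smallest close index, and each time take the next close index at least \(g\) beyond the previous choice. Each step skips fewer than \(g\) close indices, so the number of selections is at least \((n/2)/g\geq 4k\). In particular we obtain close indices \(j_0<j_1<\dots<j_{2k}\) with \(j_{i+1}-j_i\geq g\).

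\emph{Building orthonormal interval vectors.} For \(0\leq i<2k\), set \(h_i=t_{j_i}-t_{j_{i+1}}\). This is the indicator of the interval \([j_i,j_{i+1})\). The intervals are pairwise disjoint and \(\lVert h_i\rVert_2\geq\sqrt g\). Since \(h_i-(Pt_{j_i}-Pt_{j_{i+1}})=(t_{j_i}-Pt_{j_i})-(t_{j_{i+1}}-Pt_{j_{i+1}})\) and \(Pt_{j_i}-Pt_{j_{i+1}}\in E\), the triangle inequality gives
\[
  \dist(h_i,E)_2<2\delta .
\]
The normalized vectors \(w_i=h_i/\lVert h_i\rVert_2\) are therefore orthonormal, with \(\dist(w_i,E)_2<2\delta/\sqrt g\). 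It follows that
\[
  \lVert Pw_i\rVert_2^2>1-\frac{4\delta^2}{g}>\frac12,
\]
using \(g>8\delta^2\).

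\emph{Contradiction with Bessel.} Extend \((w_i)\) to an orthonormal basis. Bessel's inequality then gives \(\sum_i\lVert Pw_i\rVert_2^2\leq\operatorname{tr}P\leq k\). On the other hand, summing the previous bound over the \(2k\) vectors gives \(\sum_{i<2k}\lVert Pw_i\rVert_2^2>k\). This contradiction shows that at most \(n/2\) indices are close.

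The main obstacle is the bookkeeping around the constants. The separation \(g\) has to be integral, exceed \(8\delta^2\) strictly, and still allow \(2k+1\) separated selections out of \(n/2\) close indices. The hypotheses \(n\geq16\) and \(k\leq n/16\) are what guarantee \(n/(16k)\geq1\), and hence the bound \(g\leq n/(8k)\) used in the counting step. I would also check two edge cases: a tie at exactly \(n/2\) close indices, which is harmless because the claim only needs "at most \(n/2\)"; and the case \(j_{2k}\) close to \(n\), which is also harmless since the \(h_i\) use only consecutive differences and never \(t_{n+1}\).
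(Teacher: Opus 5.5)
Your proof is correct and follows the paper's argument: it uses the same threshold \(\sqrt{n/(128k)}\), builds \(2k\) disjoint interval indicators \(t_{\sigma}-t_{\tau}\) from close endpoints, and derives the same Bessel contradiction, giving the same constant. The differences are only in bookkeeping: you pick endpoints greedily with an absolute gap \(g=\lfloor n/(16k)\rfloor+1\) and use consecutive intervals, whereas the paper takes every \(s\)-th close index with \(s=\lfloor G/(4k)\rfloor\) and uses every other gap as an interval.
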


\begin{proof}
Fix a subspace \(E\) with \(\dim E\leq k\) and a threshold \(\theta>0\).
Suppose that at least \(n/2\) indices satisfy
\begin{equation}
  \dist(t_j,E)_2<\theta.
  \label{eq:close-column}
\end{equation}
List any \(G\geq n/2\) such indices as
\[
  g_1<\cdots<g_G.
\]
Set \(M=2k\) and
\begin{equation}
  s=\left\lfloor\frac{G}{4k}\right\rfloor.
  \label{eq:block-spacing}
\end{equation}
Since \(k\leq n/16\) and \(G\geq n/2\), one has \(G/(4k)\geq2\), so
\begin{equation}
  s\geq\frac{G}{8k}\geq\frac{n}{16k}\geq1.
  \label{eq:spacing-lower}
\end{equation}

For \(0\leq\ell<M\), choose
\begin{equation}
  \sigma_\ell=g_{2\ell s+1},
  \qquad
  \tau_\ell=g_{(2\ell+1)s+1}.
  \label{eq:endpoints}
\end{equation}
All requested indices are available: since \(2Ms=4ks\leq G\) and
\(s\geq1\),
\[
  (2M-1)s+1\leq G-s+1\leq G.
\]
The intervals
\[
  I_\ell=[\sigma_\ell,\tau_\ell)
\]
are pairwise disjoint and have cardinality at least \(s\).  Moreover,
\begin{equation}
  x_\ell=t_{\sigma_\ell}-t_{\tau_\ell}
  =\mathbf 1_{I_\ell}.
  \label{eq:interval-vector}
\end{equation}
Hence the normalized vectors
\[
  y_\ell=\frac{x_\ell}{\lVert x_\ell\rVert_2},
  \qquad 0\leq\ell<M,
\]
are orthonormal.  If \(P_E\) denotes orthogonal projection onto \(E\),
Bessel's inequality gives
\begin{equation}
  \sum_{\ell=0}^{M-1}\dist(y_\ell,E)_2^2
  =
  M-\sum_{\ell=0}^{M-1}\lVert P_Ey_\ell\rVert_2^2
  \geq M-k=k.
  \label{eq:bessel}
\end{equation}

Distance to a linear subspace is a seminorm.  Both endpoints in
\eqref{eq:endpoints} satisfy \eqref{eq:close-column}, so
\begin{equation}
  \dist(x_\ell,E)_2
  \leq
  \dist(t_{\sigma_\ell},E)_2
  +
  \dist(t_{\tau_\ell},E)_2
  <2\theta.
  \label{eq:interval-close}
\end{equation}
Since \(\lVert x_\ell\rVert_2\geq\sqrt{s}\), equations
\eqref{eq:bessel}--\eqref{eq:interval-close} imply
\begin{equation}
  k
  <
  M\frac{4\theta^2}{s}
  =
  \frac{8k\theta^2}{s}.
  \label{eq:contradiction-inequality}
\end{equation}
Choose
\[
  \theta=\sqrt{\frac{n}{128k}}.
\]
By \eqref{eq:spacing-lower}, \(\theta^2\leq s/8\), contradicting
\eqref{eq:contradiction-inequality}.  Therefore at least \(n/2\) columns
have distance at least \(\sqrt{n/(128k)}\).  Summing their distances gives
\[
  \sum_{j=1}^{n}\dist(t_j,E)_2
  \geq
  \frac n2\sqrt{\frac{n}{128k}}
  =
  \frac{1}{16\sqrt2}\frac{n^{3/2}}{\sqrt{k}}.
\]
Taking the infimum over \(E\) proves the lemma.
\end{proof}

\begin{lemma}[Matching aggregate-width upper order]
\label{lem:width-upper}
For \(1\leq k\leq n\),
\begin{equation}
  D_k(T_n)
  \leq
  \sqrt2\,\frac{n^{3/2}}{\sqrt{k}}.
  \label{eq:width-upper}
\end{equation}
Consequently,
\[
  D_k(T_n)
  =
  \Theta\!\left(\frac{n^{3/2}}{\sqrt{k}}\right)
\]
uniformly for \(n\geq16\) and \(1\leq k\leq n/16\).
\end{lemma}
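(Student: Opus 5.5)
We exhibit an explicit subspace \(E\) of dimension at most \(k\) and bound \(\sum_j\dist(t_j,E)_2\) from above. Since \(D_k\) is an infimum over such subspaces, any admissible \(E\) gives an upper bound.

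The natural choice is a block-step space. Let \(b=\lceil n/k\rceil\) and cut \(\{1,\ldots,n\}\) into consecutive blocks \(B_1,\ldots,B_{k'}\) of length at most \(b\), where \(k'=\lceil n/b\rceil\leq k\). Take
\[
E=\operatorname{span}\{t_{\beta_1},\ldots,t_{\beta_{k'}}\},
\]
where \(\beta_i\) is the first index of \(B_i\).

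For \(j\in B_i\), the vector \(t_{\beta_i}\in E\) satisfies \(t_{\beta_i}-t_j=\mathbf 1_{[\beta_i,j)}\). This interval has at most \(b-1\) elements, so
\[
\dist(t_j,E)_2\leq\sqrt{j-\beta_i}.
\]
Summing over each block gives \(\sum_{d=0}^{b-1}\sqrt d\leq\tfrac23 b^{3/2}\), and there are \(k'\) blocks. Writing \(b^{3/2}=b\sqrt b\) and using \(k'b\leq n+b\leq 2n\), we get
\[
\sum_j\dist(t_j,E)_2\leq\tfrac23\,k'\,b^{3/2}\leq\tfrac43\,n\sqrt b .
\]
Since \(b=\lceil n/k\rceil\leq 2n/k\), this is at most \(\tfrac43\sqrt2\,n^{3/2}/\sqrt k\).

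This constant is slightly too large. The fix is to take the anchor at the block midpoint instead. A cleaner route is to sum exactly: \(\sum_{d=0}^{b-1}\sqrt d\leq\int_0^b\sqrt x\,dx\), and to bound \(k'b^{3/2}\) more carefully.

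\begin{itemize}
\item If \(k\) divides \(n\), then \(b=n/k\) and the total is \(\tfrac23 n^{3/2}/\sqrt k\).
\item In general, use \(b\leq n/k+1\leq 2n/k\) together with the sharper estimate \(\sum_{j\in B_i}\sqrt{j-\beta_i}\leq\tfrac23|B_i|^{3/2}\) and \(\sum_i|B_i|=n\). Then
\[
\sum_i|B_i|^{3/2}\leq\sqrt b\sum_i|B_i|=n\sqrt b\leq\sqrt2\,n^{3/2}/\sqrt k,
\]
so the total is at most \(\tfrac23\sqrt2\,n^{3/2}/\sqrt k\leq\sqrt2\,n^{3/2}/\sqrt k\).
\end{itemize}

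The only delicate point is the bookkeeping when \(k\nmid n\), namely using \(\lceil n/k\rceil\leq 2n/k\) for \(k\leq n\). The approach itself is routine.

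For the \(\Theta\) statement, combine this upper bound with Lemma~\ref{lem:width-lower}. In the range \(n\geq16\), \(1\leq k\leq n/16\), the two bounds \(\tfrac{1}{16\sqrt2}n^{3/2}/\sqrt k\) and \(\sqrt2\,n^{3/2}/\sqrt k\) differ only by the absolute factor \(32\).
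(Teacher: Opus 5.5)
Your argument is correct and is essentially the paper's: the span of the block-start suffixes \(t_{\beta_i}\) equals the span of the block indicators used there, and each \(t_j\) is approximated by a step vector that differs from it only inside its own block. The paper simply uses the uniform bound \(\dist(t_j,E)_2\leq\sqrt{\lceil n/k\rceil}\leq\sqrt{2n/k}\) for each of the \(n\) columns, which gives \(\sqrt2\) directly; your integral refinement (which improves the constant to \(\tfrac23\sqrt2\)) and the detour through \(k'b\leq 2n\) that caused your first overshoot are therefore unnecessary.
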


\begin{proof}
Partition \(\{1,\ldots,n\}\) into \(k\) consecutive nonempty blocks, each
of size at most \(\lceil n/k\rceil\), and let \(E\) be the span of their
indicator vectors.  Each suffix indicator \(t_j\) agrees outside the block
containing \(j\) with the indicator of a union of complete blocks.  The latter
vector lies in \(E\).  Therefore
\[
  \dist(t_j,E)_2
  \leq
  \sqrt{\lceil n/k\rceil}
  \leq
  \sqrt{\frac{2n}{k}}.
\]
Summing over the \(n\) columns proves \eqref{eq:width-upper}.  The matching
lower order follows from Lemma~\ref{lem:width-lower}.
\end{proof}

\begin{remark}[Why a quadratic shortcut is insufficient]
\label{rem:rms}
A lower bound on \(\sum_j\dist(t_j,E)_2^2\) does not imply the required
lower bound on \(\sum_j\dist(t_j,E)_2\) by reversing Cauchy--Schwarz.  The
false inequality
\[
  \sum_{j=1}^{n}d_j
  \geq
  \sqrt{n\sum_{j=1}^{n}d_j^2}
\]
already fails for \((d_1,\ldots,d_n)=(1,0,\ldots,0)\) when \(n>1\).
Lemma~\ref{lem:width-lower} instead proves that a fixed positive fraction of
the columns is far from every \(k\)-plane.
\end{remark}

\section{From widths to a \texorpdfstring{\(p\)}{p}-nuclear obstruction}
\label{sec:nuclear}

\subsection{The widths are approximation numbers in the nuclear norm}

Let \(\nucideal(\ell_\infty^n,\ell_2^m)\) denote the finite-dimensional space
of nuclear operators equipped with its nuclear norm \(\nu_1\).  For
\(C:\ell_\infty^n\to\ell_2^m\),
\begin{equation}
  \nu_1(C)
  =
  \sum_{j=1}^{n}\lVert Ce_j\rVert_2.
  \label{eq:nuclear-column-norm}
\end{equation}
Indeed, the column representation
\(C=\sum_{j}(Ce_j)e_j^{\mathsf T}\) gives the upper bound; conversely, every
nuclear representation \(C=\sum_qy_qw_q^{\mathsf T}\) satisfies
\[
  \sum_j\lVert Ce_j\rVert_2
  \leq
  \sum_q\lVert y_q\rVert_2\sum_j|w_q(j)|
  =
  \sum_q\lVert y_q\rVert_2\lVert w_q\rVert_1,
\]
and taking the infimum gives the reverse bound.  Here \(\nu_1\) is the
ordinary nuclear norm; it is distinct from the \(p\)-nuclear quasi-norm
\(\nu_p\) of \eqref{eq:grothendieck}.

Let \(a_j(A\mid\nucideal)\) be the \(j\)-th approximation number of \(A\) when
the ambient norm is \(\nu_1\) and the approximating set consists of operators
of rank at most \(j-1\).  Equation \eqref{eq:nuclear-column-norm} gives the
exact identification
\begin{equation}
  a_{k+1}(A\mid\nucideal)
  =
  \inf_{\rank(B)\leq k}\nu_1(A-B)
  =
  D_k(A).
  \label{eq:width-as-approximation-number}
\end{equation}
For one inequality direction, take \(E=\operatorname{range}(B)\) and sum the
columnwise distance bounds.  For the other, take \(B=P_EA\), where \(P_E\) is
the orthogonal projection onto a \(k\)-dimensional approximating subspace, and
then take the infimum over \(E\).

\begin{remark}[The conversion below is classical]
\label{rem:classical-conversion}
Theorem~\ref{thm:width-profile} is a finite-dimensional, quantitative form of
an inclusion between operator ideals that is not due to this paper, and the
attribution is recorded here rather than in a numbered result of our own.

In the terminology of Hinrichs and Pietsch~\cite[\S7]{hinrichs-pietsch}, take
the approximation scheme whose ambient space is the nuclear operators
\(\nucideal(X,Y)\) under the nuclear norm and whose approximating sets are the
operators of rank at most \(j\), and write
\(\nucideal^{\mathrm{app}}_{s,w}:=(\nucideal)^{1/s}_{w}\) for the resulting
approximation space, with quasi-norm
\(\lVert x\mid X^{\varrho}_{w}\rVert
=\lVert(j^{\varrho-1/w}a_j(x\mid X))\mid\ell_w\rVert\).
Their Theorem 7.1 states
\begin{equation}
  \nucideal_{r,w}\subseteq\nucideal^{\mathrm{app}}_{s,w}
  \qquad\text{if }0<r<1,\quad\tfrac1s=\tfrac1r-1,\quad 0<w\leq\infty.
  \label{eq:hp-theorem}
\end{equation}
Take \(r=w=p\), so that \(\nucideal_{p,p}=\nucideal_p\) in that paper's
abbreviation and \(\varrho=1/s=1/p-1\).  The target quasi-norm of \(A\) is
then
\[
  \bigl\lVert A\mid\nucideal^{\mathrm{app}}_{s,p}\bigr\rVert
  =
  \Bigl(\sum_{j\geq1}
  \bigl[j^{(1/p-1)-1/p}a_j(A\mid\nucideal)\bigr]^{p}\Bigr)^{1/p}
  =
  \Bigl(\sum_{j\geq1}j^{-p}a_j(A\mid\nucideal)^{p}\Bigr)^{1/p},
\]
with the ambient norm and approximating class exactly those of
\eqref{eq:width-as-approximation-number}, so
\(a_j(A\mid\nucideal)=D_{j-1}(A)\).  Since \eqref{eq:hp-theorem} is a bounded
inclusion of quasi-Banach operator ideals, re-indexing by \(k=j-1\), using
\(k^{-p}\leq2^p(k+1)^{-p}\), and applying
Proposition~\ref{prop:finite-countable} yields
\eqref{eq:width-profile} below with an unspecified constant depending only on
\(p\).

The same conclusion follows from Pietsch's earlier
framework~\cite[pp.~117--118, 120, 123--124, 126]{pietsch-approximation}: the
Transformation Theorem applies with sequence exponent \(u=p\), smoothness
\(\varrho=1/p-1\) and rank-growth exponent \(1\), the sparse-sequence identity
on p.~123 supplies \((\ell_1,f_m)^{1/p-1}_{p}=\ell_{p,p}=\ell_p\), and p.~126
places the nuclear operators in the framework.

The short direct proof in Sections~\ref{sec:tails}--\ref{sec:hardy} is
retained because it keeps the finite-dimensional chain self-contained, it
supplies the explicit admissible constant \(C_p=3/(1-2^{-(1-p)})\) appearing in
the displayed lower bound, and it makes visible that cancellation is retained
inside a signed residual until a norm is applied.  Its qualitative content is
not claimed here; the unspecified constant furnished by the classical route
would suffice for every asymptotic consequence in this paper.
\end{remark}

\subsection{Rank-one tails}
\label{sec:tails}

\begin{lemma}[Every atom tail pays the aggregate width]
\label{lem:tail}
Let
\begin{equation}
  A=\sum_{q=1}^{r}u_qv_q^{\mathsf T},
  \qquad
  \lambda_q=\lVert u_q\rVert_2\lVert v_q\rVert_1,
  \label{eq:atom-decomposition}
\end{equation}
and relabel the atoms so that
\(\lambda_1\geq\cdots\geq\lambda_r\geq0\).  Then, for
\(0\leq k<r\),
\begin{equation}
  \sum_{q>k}\lambda_q\geq D_k(A).
  \label{eq:tail-width}
\end{equation}
\end{lemma}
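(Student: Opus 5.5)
The idea is to split the decomposition \eqref{eq:atom-decomposition} into its \(k\) heaviest atoms and the remaining tail. Set
\[
  B=\sum_{q\leq k}u_qv_q^{\mathsf T},
  \qquad
  A-B=\sum_{q>k}u_qv_q^{\mathsf T}.
\]
Then \(\rank(B)\leq k\), so the identification \eqref{eq:width-as-approximation-number} gives \(D_k(A)\leq\nu_1(A-B)\). I could instead argue directly with \(E=\operatorname{span}\{u_1,\ldots,u_k\}\), which has \(\dim E\leq k\) and contains every column of \(B\). This gives
\[
  \dist(a_j,E)_2\leq\lVert a_j-Be_j\rVert_2=\lVert(A-B)e_j\rVert_2 .
\]
Summing over \(j\) and taking the infimum in Definition~\ref{def:width} yields \(D_k(A)\leq\sum_j\lVert(A-B)e_j\rVert_2\).

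Next I bound the right-hand side by the tail coefficient mass. I use the column formula \eqref{eq:nuclear-column-norm}, specifically the direction already proved there. Apply the triangle inequality to each column of the signed residual:
\[
  \lVert(A-B)e_j\rVert_2\leq\sum_{q>k}|v_q(j)|\,\lVert u_q\rVert_2 .
\]
Summing over \(j\) and exchanging the finite sums gives \(\sum_{q>k}\lVert u_q\rVert_2\lVert v_q\rVert_1=\sum_{q>k}\lambda_q\). Combining the two steps proves \eqref{eq:tail-width}.

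Nothing here is hard. The one point to handle carefully is that the residual \(A-B\) is formed before any norm is applied, so cancellation inside it is only helpful. The ordering \(\lambda_1\geq\cdots\geq\lambda_r\) is not actually needed: the argument works for any \(k\) atoms removed. The ordering matters only later, when the tails are fed into the Hardy-type accumulation. The case \(k=0\) is the same computation with \(E=\{0\}\) and \(B=0\).
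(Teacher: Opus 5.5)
Your proposal is correct and follows the paper's proof: both take \(E=\operatorname{span}\{u_1,\ldots,u_k\}\) and bound each column's distance by the Euclidean norm of the signed residual \(\sum_{q>k}u_qv_q(j)\). Both then finish with the triangle inequality and an exchange of finite sums to reach \(\sum_{q>k}\lambda_q\). You are also right that the ordering of the \(\lambda_q\) is not used in this step.
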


\begin{proof}
Let \(E_k=\operatorname{span}\{u_1,\ldots,u_k\}\), so
\(\dim E_k\leq k\).  For column \(j\), the contribution of the initial \(k\)
atoms lies in \(E_k\).  Forming the exact signed residual before applying a
norm gives
\begin{align}
  D_k(A)
  &\leq
  \sum_{j=1}^{n}
  \left\lVert\sum_{q>k}u_qv_q(j)\right\rVert_2
  \notag\\
  &\leq
  \sum_{q>k}\lVert u_q\rVert_2
  \sum_{j=1}^{n}|v_q(j)|
  =
  \sum_{q>k}\lambda_q.
  \label{eq:tail-proof}
\end{align}
No sign restriction is used.
\end{proof}

\subsection{Reverse Hardy accumulation}
\label{sec:hardy}

\begin{lemma}[Tail Hardy inequality below one]
\label{lem:hardy}
For every \(0<p<1\), every finite nonincreasing sequence
\(\lambda_1\geq\cdots\geq\lambda_r\geq0\) satisfies
\begin{equation}
  \sum_{k=1}^{r-1}
  k^{-p}
  \left(\sum_{q>k}\lambda_q\right)^p
  \leq
  C_p\sum_{q=1}^{r}\lambda_q^p,
  \qquad
  C_p=\frac{3}{1-2^{-(1-p)}}.
  \label{eq:hardy}
\end{equation}
\end{lemma}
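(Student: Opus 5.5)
We prove \eqref{eq:hardy} by dyadic blocking.  The proof has three steps: split the tail sums into dyadic blocks, use \(p\)-subadditivity to break the \(p\)-th power of a tail into pieces, and swap the order of summation so that each atom \(\lambda_q\) is charged a geometric series bounded by \(1/(1-2^{-(1-p)})\).

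\textbf{Step 1: dyadic blocks.}  Set \(B_i=\{q:2^i\leq q<2^{i+1}\}\) and \(S_i=\sum_{q\in B_i}\lambda_q\) for \(i\geq0\).  Because the sequence is nonincreasing, \(S_i\leq 2^i\lambda_{2^i}\), where \(\lambda_q:=0\) for \(q>r\).

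For \(2^m\leq k<2^{m+1}\) the tail satisfies
\[
  \sum_{q>k}\lambda_q\leq\sum_{i\geq m}S_i .
\]
Since \(0<p<1\), the map \(t\mapsto t^p\) is subadditive, so
\[
  \Bigl(\sum_{q>k}\lambda_q\Bigr)^p\leq\sum_{i\geq m}S_i^p .
\]
The block of \(k\) in \([2^m,2^{m+1})\) contributes
\[
  \sum_{2^m\leq k<2^{m+1}}k^{-p}\leq 2^m\cdot 2^{-mp}=2^{m(1-p)}.
\]
Hence the left side of \eqref{eq:hardy} is at most
\[
  \sum_{m\geq0}2^{m(1-p)}\sum_{i\geq m}S_i^p .
\]

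\textbf{Step 2: exchange the order of summation.}  Swapping the two sums gives
\[
  \sum_{i\geq0}S_i^p\sum_{m=0}^{i}2^{m(1-p)}\leq\sum_{i\geq0}S_i^p\,\frac{2^{i(1-p)}}{1-2^{-(1-p)}} .
\]
Inserting \(S_i\leq2^i\lambda_{2^i}\) yields
\[
  S_i^p\,2^{i(1-p)}\leq 2^{ip}\lambda_{2^i}^p\,2^{i(1-p)}=2^i\lambda_{2^i}^p .
\]
Here the powers of two cancel exactly.

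\textbf{Step 3: pass back to the original sum.}  By monotonicity,
\[
  2^i\lambda_{2^i}^p\leq 2\sum_{2^{i-1}<q\leq 2^i}\lambda_q^p\quad(i\geq1),
\]
since that range contains \(2^{i-1}\) indices, each with \(\lambda_q\geq\lambda_{2^i}\).  For \(i=0\) the term is simply \(\lambda_1^p\).  Summing over \(i\) gives at most \(3\sum_q\lambda_q^p\): the ranges \((2^{i-1},2^i]\) are disjoint, which contributes the factor \(2\), and the extra \(\lambda_1^p\) from \(i=0\) raises the constant to \(3\).  This matches \(C_p=3/(1-2^{-(1-p)})\).

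The main obstacle is the bookkeeping at the block edges rather than any conceptual step.  Two points need care: the tail at \(k\) starts strictly after \(k\), which may lie mid-block, and the truncation at \(r\) must be handled.  Both are resolved by the crude containment \(\sum_{q>k}\lambda_q\leq\sum_{i\geq m}S_i\) together with zero padding, which only enlarges the right side.  It remains to confirm that these choices keep the constant at \(3\) and not larger.  If the constant comes out too large, the fallback is to charge \(S_i\) instead to the atoms of the previous dyadic block.
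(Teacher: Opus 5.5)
Your proposal is correct and follows the paper's proof step for step: dyadic blocking in $k$, $p$-subadditivity applied to the dyadic tail, exchange of summation producing the factor $1/(1-2^{-(1-p)})$, the bound $S_i\le 2^i\lambda_{2^i}$, and charging $2^i\lambda_{2^i}^p$ to $2^{i-1}$ earlier atoms. The only deviation is that you charge to $(2^{i-1},2^i]$ rather than the paper's preceding block $[2^{i-1},2^i)$, and since your ranges never contain $q=1$ your bookkeeping already gives the smaller constant $2/(1-2^{-(1-p)})$, so the closing hedge about the constant (and the fallback) is unnecessary.
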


\begin{proof}
Group \(k\) into dyadic blocks \(2^s\leq k<2^{s+1}\), and define
\begin{equation}
  B_s=\sum_{q\geq2^s}\lambda_q,
  \qquad
  b_t=\sum_{2^t\leq q<2^{t+1}}\lambda_q,
  \label{eq:dyadic-tail}
\end{equation}
with all sums truncated at \(r\).  Since \(0<p<1\),
\begin{equation}
  B_s^p
  =
  \left(\sum_{t\geq s}b_t\right)^p
  \leq
  \sum_{t\geq s}b_t^p.
  \label{eq:subadditive}
\end{equation}
Also,
\begin{equation}
  \sum_{2^s\leq k<2^{s+1}}k^{-p}
  \leq2^{s(1-p)}.
  \label{eq:block-weight}
\end{equation}
For \(k\) in the \(s\)-th block,
\(\sum_{q>k}\lambda_q\leq B_s\).  Therefore
\begin{align}
  \sum_{k=1}^{r-1}
  k^{-p}\left(\sum_{q>k}\lambda_q\right)^p
  &\leq
  \sum_s2^{s(1-p)}B_s^p
  \notag\\
  &\leq
  \sum_tb_t^p\sum_{s\leq t}2^{s(1-p)}
  \notag\\
  &\leq
  \frac{1}{1-2^{-(1-p)}}
  \sum_t2^{t(1-p)}b_t^p.
  \label{eq:hardy-mid}
\end{align}
Monotonicity gives \(b_t\leq2^t\lambda_{2^t}\), hence
\begin{equation}
  2^{t(1-p)}b_t^p
  \leq
  2^t\lambda_{2^t}^p.
  \label{eq:monotone-block}
\end{equation}
For \(t\geq1\), the preceding dyadic block contains \(2^{t-1}\) terms,
each at least \(\lambda_{2^t}\), so
\begin{equation}
  2^t\lambda_{2^t}^p
  \leq
  2\sum_{q=2^{t-1}}^{2^t-1}\lambda_q^p.
  \label{eq:preceding-block}
\end{equation}
The preceding blocks are disjoint, while the \(t=0\) term contributes
\(\lambda_1^p\).  Substituting \eqref{eq:preceding-block} into
\eqref{eq:hardy-mid} proves \eqref{eq:hardy}.
\end{proof}

\begin{theorem}[Explicit finite-dimensional width obstruction]
\label{thm:width-profile}
For every finite real matrix \(A\) and \(0<p<1\),
\begin{equation}
  \boxed{
  \nucpow_p(A)
  \geq
  \frac{1}{C_p}
  \sum_{k=1}^{\rank(A)-1}
  k^{-p}D_k(A)^p,
  }
  \label{eq:width-profile}
\end{equation}
where \(C_p=3/(1-2^{-(1-p)})\) is admissible.  The sum is empty when
\(\rank(A)\leq1\).
\end{theorem}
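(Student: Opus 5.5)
The plan is to combine Lemma~\ref{lem:tail} with Lemma~\ref{lem:hardy} and take the infimum over representations.

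Fix an arbitrary finite exact representation \(A=\sum_{q=1}^{r}u_qv_q^{\mathsf T}\) and relabel the atoms so that \(\lambda_1\geq\cdots\geq\lambda_r\geq0\).

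\emph{Index range.} The summation range is justified as follows. The \(u_q\) span a space containing the column space of \(A\), so \(r\geq\rank(A)\). Hence every \(k\) with \(1\leq k\leq\rank(A)-1\) satisfies \(k<r\), and Lemma~\ref{lem:tail} applies: \(D_k(A)\leq\sum_{q>k}\lambda_q\). Raising to the power \(p>0\) preserves the inequality. Multiplying by \(k^{-p}\) and summing over \(1\leq k\leq\rank(A)-1\) gives
\[
  \sum_{k=1}^{\rank(A)-1}k^{-p}D_k(A)^p
  \leq
  \sum_{k=1}^{r-1}k^{-p}\Bigl(\sum_{q>k}\lambda_q\Bigr)^p,
\]
where extending the range to \(r-1\) only adds nonnegative terms.

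\emph{Accumulation.} Next, Lemma~\ref{lem:hardy} bounds the right-hand side by \(C_p\sum_{q=1}^{r}\lambda_q^p\), with the stated constant \(C_p=3/(1-2^{-(1-p)})\). Dividing by \(C_p\) shows that the cost \(\sum_q\lambda_q^p\) of this representation dominates the right side of \eqref{eq:width-profile}. Since the representation was arbitrary, taking the infimum in Definition~\ref{def:nuclear} yields the boxed inequality.

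\emph{Degenerate cases.} If \(\rank(A)\leq1\), the sum is empty and the claim reduces to \(\nucpow_p(A)\geq0\), which is trivial. The case \(A=0\) is also covered, since then \(\rank(A)=0\) and the sum is again empty.

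\emph{Main obstacle.} The only delicate point is bookkeeping. First, Lemma~\ref{lem:tail} requires \(k<r\), which the rank argument above supplies. Second, the relabelling by decreasing \(\lambda_q\) needed for Lemma~\ref{lem:tail} must be the same ordering that Lemma~\ref{lem:hardy} requires; both use nonincreasing order, so they are compatible. The analytic content is already contained in the two lemmas.
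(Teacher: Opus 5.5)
Your proposal is correct and is essentially the paper's proof: fix a finite representation, sort the atom costs, use \(r\geq\rank(A)\) to apply Lemma~\ref{lem:tail} for \(1\leq k\leq\rank(A)-1\), extend the sum to \(r-1\), apply Lemma~\ref{lem:hardy}, and take the infimum. The only cosmetic difference is that the paper first deletes zero atoms; your version does not need this, because both lemmas already allow \(\lambda_q=0\).
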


\begin{proof}
If \(\rank(A)\leq1\), the right side is zero and the claim is immediate.
Assume \(\rank(A)\geq2\).  Fix a finite rank-one representation, delete
zero atoms, and sort the atom costs as in Lemma~\ref{lem:tail}.  If \(r\)
atoms remain, then \(r\geq\rank(A)\).  Lemmas~\ref{lem:tail}
and~\ref{lem:hardy} give
\begin{align*}
  C_p\sum_q\lambda_q^p
  &\geq
  \sum_{k=1}^{r-1}
  k^{-p}\left(\sum_{q>k}\lambda_q\right)^p\\
  &\geq
  \sum_{k=1}^{\rank(A)-1}
  k^{-p}D_k(A)^p.
\end{align*}
The right side is independent of the representation.  Taking the infimum
proves \eqref{eq:width-profile}.
\end{proof}

\section{The critical exponent and the factorization lower bound}
\label{sec:lower}

\begin{theorem}[Critical nuclear lower bound]
\label{thm:critical}
Let
\begin{equation}
  c_{\mathrm N}
  =
  \frac{1-2^{-1/3}}{144}.
  \label{eq:c-nuclear}
\end{equation}
For every \(n\geq1\),
\begin{equation}
  \nucpow_{2/3}(T_n)
  \geq
  c_{\mathrm N}n\log(n+1).
  \label{eq:critical-lower}
\end{equation}
\end{theorem}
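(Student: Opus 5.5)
The plan is to specialize Theorem~\ref{thm:width-profile} to \(A=T_n\) at \(p=2/3\) and feed in Lemma~\ref{lem:width-lower}. A separate trivial bound will cover the small \(n\) where that lemma is unavailable or too weak.

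\emph{Main step.} At \(p=2/3\) the admissible constant is
\[
  C_{2/3}=\frac{3}{1-2^{-1/3}}.
\]
Since \(T_n\) is invertible, \(\rank(T_n)=n\), so the sum in \eqref{eq:width-profile} runs over \(1\leq k\leq n-1\). This range contains every \(k\leq m:=\lfloor n/16\rfloor\). Discarding the remaining nonnegative terms and inserting \eqref{eq:width-lower}, each retained term is
\[
  k^{-2/3}D_k(T_n)^{2/3}
  \geq
  \frac{k^{-2/3}\bigl(n^{3/2}k^{-1/2}\bigr)^{2/3}}{(16\sqrt2)^{2/3}}
  =
  \frac{n}{8k},
\]
because \((16\sqrt2)^{2/3}=(2^{9/2})^{2/3}=8\). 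The exponent \(2/3\) is exactly the one that makes every rank scale contribute order \(n\). Summing and using the natural-log harmonic bound \(\sum_{k\leq m}k^{-1}\geq\log(m+1)\) from Section~\ref{sec:prelim} gives
\[
  \nucpow_{2/3}(T_n)
  \geq
  \frac{1-2^{-1/3}}{24}\,n\log\bigl(\lfloor n/16\rfloor+1\bigr).
\]

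\emph{Converting the logarithm.} I would then use \(\lfloor n/16\rfloor+1\geq (n+1)/16\). This gives \(\log(\lfloor n/16\rfloor+1)\geq\log(n+1)-\log16\). The right-hand side is at least \(\tfrac16\log(n+1)\) as soon as \((n+1)^{5/6}\geq16\), that is, for \(n+1\geq16^{6/5}\approx27.9\). So for \(n\geq27\) we obtain \eqref{eq:critical-lower} with \(c_{\mathrm N}=(1-2^{-1/3})/144\). This matches the stated constant and confirms that \(144=24\cdot6\) is the intended bookkeeping.

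\emph{Small \(n\), the main obstacle.} For \(1\leq n\leq26\) the width lemma either does not apply or the factor \(\log(m+1)\) is too small. Here I would use a crude direct bound. For any representation, subadditivity of \(t\mapsto t^{2/3}\) gives \(\sum_q\lambda_q^{2/3}\geq(\sum_q\lambda_q)^{2/3}\). By \eqref{eq:nuclear-column-norm} and the argument of Lemma~\ref{lem:tail} with \(k=0\), \(\sum_q\lambda_q\geq\nu_1(T_n)\). Writing \(t_j\) for the \(j\)-th column of \(T_n\) as in Appendix~\ref{sec:width},
\[
  \nu_1(T_n)=\sum_j\lVert t_j\rVert_2=\sum_{i=1}^n\sqrt i\geq\tfrac23 n^{3/2}.
\]
Hence \(\nucpow_{2/3}(T_n)\geq(2/3)^{2/3}n\approx0.76\,n\). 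Since \(c_{\mathrm N}\approx1.4\cdot10^{-3}\) and \(\log 28<4\), we have \(c_{\mathrm N}n\log(n+1)<0.01\,n\) in this range, so this bound settles it with much room to spare. The only real care needed is checking the threshold arithmetic where the two regimes meet.
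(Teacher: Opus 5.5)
Your proof is correct and follows the paper's route: Theorem~\ref{thm:width-profile} at \(p=2/3\) combined with Lemma~\ref{lem:width-lower}, the natural-log harmonic bound, a comparison \(\log(\lfloor n/16\rfloor+1)\geq\frac16\log(n+1)\) above a threshold, and a crude bound for the remaining small \(n\). The differences are cosmetic. You reach the logarithm comparison via \(\lfloor n/16\rfloor+1\geq(n+1)/16\), with threshold \(n\geq27\) instead of the paper's \(n\geq32\). For small \(n\) you use the nuclear norm to get \(\nucpow_{2/3}(T_n)\geq(2/3)^{2/3}n\), whereas the paper uses a single unit entry to get \(\nucpow_{2/3}(T_n)\geq1\).
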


\begin{proof}
We begin with \(n\geq32\).  Theorem~\ref{thm:width-profile} and
Lemma~\ref{lem:width-lower} imply
\begin{align}
  C_{2/3}\nucpow_{2/3}(T_n)
  &\geq
  \sum_{k=1}^{\lfloor n/16\rfloor}
  k^{-2/3}D_k(T_n)^{2/3}
  \notag\\
  &\geq
  (16\sqrt2)^{-2/3}n
  \sum_{k=1}^{\lfloor n/16\rfloor}\frac1k
  \notag\\
  &=
  \frac n8
  \sum_{k=1}^{\lfloor n/16\rfloor}\frac1k.
  \label{eq:critical-harmonic}
\end{align}
The equality uses
\((16\sqrt2)^{2/3}=(2^{9/2})^{2/3}=8\).  The harmonic bound gives
\[
  \sum_{k=1}^{\lfloor n/16\rfloor}\frac1k
  \geq
  \log\left(\left\lfloor\frac n{16}\right\rfloor+1\right)
  >
  \log\frac n{16}.
\]
For \(n\geq32\),
\begin{equation}
  \log\frac n{16}
  \geq
  \frac16\log(n+1).
  \label{eq:log-comparison}
\end{equation}
Indeed, the ratio of the left logarithm to \(\log(n+1)\) is increasing on
this range and already exceeds \(1/6\) at \(n=32\).  Combining
\eqref{eq:critical-harmonic}--\eqref{eq:log-comparison} yields
\[
  \nucpow_{2/3}(T_n)
  \geq
  \frac{n\log(n+1)}{48C_{2/3}}
  =
  c_{\mathrm N}n\log(n+1).
\]

It remains to cover \(1\leq n<32\).  For every rank-one representation,
choose a unit entry \((i,j)\) of \(T_n\).  Then
\begin{equation}
  1
  =
  |(T_n)_{ij}|
  \leq
  \sum_q|u_q(i)v_q(j)|
  \leq
  \sum_q\lVert u_q\rVert_2\lVert v_q\rVert_1
  =
  \sum_q\lambda_q.
  \label{eq:unit-entry}
\end{equation}
Subadditivity at exponent \(2/3\) gives
\begin{equation}
  1
  \leq
  \left(\sum_q\lambda_q\right)^{2/3}
  \leq
  \sum_q\lambda_q^{2/3}.
  \label{eq:unit-subadditivity}
\end{equation}
Taking the infimum yields
\(\nucpow_{2/3}(T_n)\geq1\).  Since
\(n\log(n+1)\leq31\log32<108\) and \(c_{\mathrm N}<1/108\), the stated
bound follows for the remaining dimensions.
\end{proof}

\begin{theorem}[Arbitrary-real lower bound]
\label{thm:factor-lower}
Every finite-dimensional real factorization \(T_n=LR\) satisfies
\begin{equation}
  \frac{\lVert L\rVert_{\mathrm F}}{\sqrt n}
  \lVert R\rVert_{1\to1}
  \geq
  c_{\mathrm N}^{3/2}(\log(n+1))^{3/2}.
  \label{eq:factor-lower}
\end{equation}
Consequently,
\begin{equation}
  \cfrob(T_n),\ctwo(T_n)
  \geq
  c_{\mathrm N}^{3/2}(\log(n+1))^{3/2}.
  \label{eq:both-lower}
\end{equation}
\end{theorem}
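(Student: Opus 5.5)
Given a factorization \(T_n=LR\) with \(L\in\R^{n\times r}\) and \(R\in\R^{r\times n}\), we will write it as a sum of rank-one atoms and feed that representation into Theorem~\ref{thm:critical}. Set \(u_q=Le_q\), the \(q\)-th column of \(L\), and \(v_q^{\mathsf T}=e_q^{\mathsf T}R\), the \(q\)-th row of \(R\). Then \(T_n=\sum_{q=1}^{r}u_qv_q^{\mathsf T}\) is a finite exact representation, admissible in Definition~\ref{def:nuclear}. Its atom costs are \(\lambda_q=\lVert u_q\rVert_2\lVert v_q\rVert_1\). Theorem~\ref{thm:critical} therefore gives
\[
  c_{\mathrm N}n\log(n+1)\leq\nucpow_{2/3}(T_n)\leq\sum_{q=1}^{r}\lVert u_q\rVert_2^{2/3}\lVert v_q\rVert_1^{2/3}.
\]

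Next we bound the right-hand side by the factorization cost using Hölder's inequality with exponents \(3\) and \(3/2\):
\[
  \sum_q\lVert u_q\rVert_2^{2/3}\lVert v_q\rVert_1^{2/3}
  \leq
  \Bigl(\sum_q\lVert u_q\rVert_2^{2}\Bigr)^{1/3}
  \Bigl(\sum_q\lVert v_q\rVert_1\Bigr)^{2/3}
  =
  \lVert L\rVert_{\mathrm F}^{2/3}\Bigl(\sum_q\lVert v_q\rVert_1\Bigr)^{2/3}.
\]
The quantity \(\sum_q\lVert v_q\rVert_1\) is the entrywise \(\ell_1\) norm of \(R\). Summing column by column instead of row by row, it is at most \(n\lVert R\rVert_{1\to1}\), since each of the \(n\) columns has \(\ell_1\) norm at most \(\lVert R\rVert_{1\to1}\). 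Combining the two displays,
\[
  c_{\mathrm N}n\log(n+1)\leq\lVert L\rVert_{\mathrm F}^{2/3}n^{2/3}\lVert R\rVert_{1\to1}^{2/3}
  = n\Bigl(\frac{\lVert L\rVert_{\mathrm F}}{\sqrt n}\lVert R\rVert_{1\to1}\Bigr)^{2/3},
\]
where the equality uses \(n^{2/3}=n\cdot n^{-1/3}=n\,(n^{-1/2})^{2/3}\). Dividing by \(n\) and raising both sides to the power \(3/2\) yields \eqref{eq:factor-lower}.

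For \eqref{eq:both-lower}, we take the infimum over all factorizations to obtain the bound for \(\cfrob(T_n)\). The bound for \(\ctwo(T_n)\) then follows from \eqref{eq:frob-row}.

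The only step that needs care is choosing the Hölder exponents so that the powers of \(n\) cancel exactly. Pairing the Frobenius mass \(\sum_q\lVert u_q\rVert_2^2\) with exponent \(1/3\) and the total \(\ell_1\) mass of \(R\) with exponent \(2/3\) is what makes \(p=2/3\) the right exponent. With this pairing the \(n\) on the left cancels against \(n^{1/3}\cdot n^{2/3}\), so no genuine obstacle remains.
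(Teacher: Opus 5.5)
Your proof is correct and follows the paper's own argument: the paper uses the same atom decomposition into columns of \(L\) and rows of \(R\), feeds it into Theorem~\ref{thm:critical}, applies Hölder with exponents \(3\) and \(3/2\), and bounds the entrywise \(\ell_1\) mass of \(R\) column by column. The only difference is cosmetic: the paper first rescales so that \(\lVert R\rVert_{1\to1}=1\), whereas you carry \(\lVert R\rVert_{1\to1}\) through the inequalities homogeneously.
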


\begin{proof}
Let \(\beta=\lVert R\rVert_{1\to1}\).  Since \(LR=T_n\neq0\), one has
\(\beta>0\).  Replacing \((L,R)\) by
\((\beta L,\beta^{-1}R)\) preserves the product and the cost, so assume
\(\lVert R\rVert_{1\to1}=1\).

Write \(l_q\) for column \(q\) of \(L\) and \(r_q^{\mathsf T}\) for row
\(q\) of \(R\).  Then
\begin{equation}
  T_n=\sum_q l_qr_q^{\mathsf T}.
  \label{eq:factor-atoms}
\end{equation}
Furthermore,
\begin{equation}
  \sum_q\lVert r_q\rVert_1
  =
  \sum_{j,q}|R_{qj}|
  \leq n,
  \label{eq:row-l1-sum}
\end{equation}
because each of the \(n\) columns of \(R\) has \(\ell_1\) norm at most one.
Hölder with exponents \(3\) and \(3/2\), applied to
\(a_q=\lVert l_q\rVert_2^{2/3}\) and \(b_q=\lVert r_q\rVert_1^{2/3}\), gives
\begin{align}
  \sum_q
  \bigl(\lVert l_q\rVert_2\lVert r_q\rVert_1\bigr)^{2/3}
  &\leq
  \left(\sum_q\lVert l_q\rVert_2^2\right)^{1/3}
  \left(\sum_q\lVert r_q\rVert_1\right)^{2/3}
  \notag\\
  &\leq
  \lVert L\rVert_{\mathrm F}^{2/3}n^{2/3}.
  \label{eq:critical-holder}
\end{align}
The left side is the cost of an admissible rank-one representation.
Theorem~\ref{thm:critical} therefore gives
\[
  c_{\mathrm N}n\log(n+1)
  \leq
  \lVert L\rVert_{\mathrm F}^{2/3}n^{2/3}.
\]
Rearranging yields
\[
  \lVert L\rVert_{\mathrm F}
  \geq
  c_{\mathrm N}^{3/2}\sqrt n(\log(n+1))^{3/2}.
\]
Undoing the normalization proves \eqref{eq:factor-lower}.  Taking the
infimum proves the \(\cfrob\) lower bound, and \eqref{eq:frob-row} proves
the \(\ctwo\) lower bound.
\end{proof}

\begin{remark}[Constants]
\label{rem:constants}
The constant in Theorem~\ref{thm:factor-lower} is explicit but not optimized.
Several estimates are deliberately crude.  All main conclusions concern
asymptotic order rather than a leading constant.
\end{remark}

\begin{remark}[Countably infinite inner dimension is also covered]
\label{rem:countable-inner}
The infima \eqref{eq:cf-def}--\eqref{eq:c2-def} range over finite inner
dimension, matching the contract of~\cite{arkhipov-kalinin}.  The proof of
Theorem~\ref{thm:factor-lower} does not use that restriction.  Let \(Q\) be
countable, \(L\in\R^{n\times Q}\) and \(R\in\R^{Q\times n}\) with
\(\sum_{q\in Q}L_{iq}R_{qj}=(T_n)_{ij}\) for all \(i,j\).  Then
\eqref{eq:row-l1-sum} still holds, because it sums the \(\ell_1\) norms of the
\(n\) \emph{columns} of \(R\) whatever the number of rows;
\(\sum_q\lVert l_q\rVert_2^2=\lVert L\rVert_{\mathrm F}^2\) is unaffected, and
the claim is vacuous unless it is finite; Hölder applies to countable sums;
and the resulting rank-one representation has finite \(2/3\)-cost, so
Proposition~\ref{prop:finite-countable} converts it into a finite one without
asymptotic loss.  Hence \eqref{eq:factor-lower} holds verbatim for countable
inner dimension, and enlarging the infima in
\eqref{eq:cf-def}--\eqref{eq:c2-def} to countable inner dimension changes
neither side of \eqref{eq:main-intro}.  This is a statement about the two
matrix costs; it does not define a Laplace mechanism with countably many noise
coordinates.
\end{remark}

\section{Dyadic upper bounds and the fixed-\texorpdfstring{\(p\)}{p} phase}
\label{sec:upper}

\begin{lemma}[Fenwick interval factorization]
\label{lem:fenwick}
There is an absolute constant \(C<\infty\) such that
\begin{equation}
  \cfrob(T_n),\ctwo(T_n)
  \leq
  C(\log(n+1))^{3/2},
  \qquad
  \nucpow_{2/3}(T_n)
  \leq
  Cn\log(n+1).
  \label{eq:fenwick-bounds}
\end{equation}
\end{lemma}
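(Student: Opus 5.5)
We construct one explicit factorization \(T_n=LR\) from dyadic (Fenwick) intervals. It must satisfy two requirements. First, every column of \(R\) has \(O(\log n)\) unit entries. Second, every row of \(L\) has \(O(\log n)\) unit entries. Both cost bounds and the \(p\)-nuclear bound will then be read off from this single factorization.

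\textbf{Construction.} Set \(m=\lceil\log_2(n+1)\rceil\) and embed \(T_n\) as the leading principal submatrix of \(T_N\) with \(N=2^m\). For each level \(0\leq s\leq m\) and each \(a\), consider the dyadic interval \(J=[a2^s+1,(a+1)2^s]\), and let the corresponding row of \(R\) be \(\mathbf 1_J^{\mathsf T}\), restricted to \(\{1,\dots,n\}\). Drop intervals that do not meet \(\{1,\dots,n\}\). Each index \(j\) lies in exactly one interval per level, so every column of \(R\) has \(m+1\) ones and \(\lVert R\rVert_{1\to1}\leq m+1\). Every prefix \(\{1,\dots,i\}\) is a disjoint union of at most \(m+1\) dyadic intervals, read off from the binary expansion of \(i\) (the Fenwick decomposition). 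Define row \(i\) of \(L\) as the \(0/1\) indicator of those intervals. Then \(LR=T_n\) exactly, since truncating intervals to \(\{1,\dots,n\}\) is harmless: every interval used for \(i\leq n\) lies inside \([1,i]\).

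\textbf{Cost bounds.} Row \(i\) of \(L\) has at most \(m+1\) ones, so \(\lVert L\rVert_{2\to\infty}\leq\sqrt{m+1}\). Hence \(\ctwo(T_n)\leq(m+1)^{3/2}\). By \eqref{eq:frob-row}, \(\cfrob(T_n)\leq\ctwo(T_n)\). Finally, \(m+1\leq C'\log(n+1)\) for all \(n\geq1\), since \(\log(n+1)\geq\log2\) absorbs the ceiling.

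\textbf{Nuclear bound.} Use the atoms \(l_qr_q^{\mathsf T}\) of \eqref{eq:factor-atoms}, indexed by dyadic intervals \(J\). The row factor satisfies \(\lVert r_J\rVert_1\leq|J|=2^s\). The column \(l_J\) is the indicator of the set of \(i\) whose Fenwick decomposition uses \(J\). For \(J=[a2^s+1,(a+1)2^s]\), an index \(i\) uses \(J\) exactly when \(i\) lies in the range between \((a+1)2^s\) and the end of the enclosing block of size \(2^{s+1}\); this forces \(a\) even. So there are at most \(2^s\) such \(i\), and \(\lVert l_J\rVert_2\leq2^{s/2}\). Thus \(\lambda_J\leq2^{3s/2}\). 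At level \(s\) there are at most \(\lceil n/2^s\rceil\) atoms with \(2^s\leq 2N\), giving
\[
\nucpow_{2/3}(T_n)\leq\sum_{s=0}^{m}\Bigl\lceil\frac{n}{2^s}\Bigr\rceil 2^{s}\leq\sum_{s=0}^m 2n\cdot\,\text{(const)},
\]
using \(\lceil n/2^s\rceil\leq 2n/2^s\) when \(2^s\leq 2n\). This sum is \(O(n\log(n+1))\). The same computation with a general \(p\) gives level sums \(\asymp n2^{s(3p/2-1)}\), which previews the phase diagram.

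\textbf{Main obstacle.} The only delicate step is this counting of how many \(i\) use a given dyadic interval. Getting \(\lVert l_J\rVert_2\leq2^{s/2}\), rather than a bound depending on \(n\), is essential for the level sums to be of order \(n\) at the critical exponent \(p=2/3\). I would verify it by writing \(i\) in binary and noting that \(J\) appears exactly when the bits of \(i\) above position \(s\) are fixed and bit \(s\) is set. The ones at level \(s\) that may exceed the range \([1,n]\), and the \(2^s\leq 2N\) boundary levels, only change constants.
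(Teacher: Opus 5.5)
Your proposal is correct and is essentially the paper's proof: the same Fenwick/dyadic factorization with \(O(\log n)\) unit entries per column of \(R\) and per row of \(L\). It also rests on the same key count, that an interval of length \(2^s\) is used by at most \(2^s\) prefixes, so each atom has \(\lambda_J^{2/3}\le 2^s\). The differences are cosmetic. You keep all dyadic intervals, and the odd-\(a\) ones simply give zero atoms. You sum \(\lceil n/2^s\rceil 2^s\le 2n\) over the levels, which is valid for every \(s\le m\) because \(2^m\le 2n\); your ``\(2^s\le 2N\)'' should read \(2^s\le 2n\). The paper instead uses the exact low-bit count \(N(1+h/2)\) at \(N=2^h\) and then restricts to \(n\).
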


\begin{proof}
Let \(N=2^h\) be the least power of two with \(N\geq n\).  For
\(1\leq q\leq N\), define
\begin{equation}
  \ell_q=2^{\nu_2(q)},
  \qquad
  I_q=\{q-\ell_q+1,\ldots,q\},
  \label{eq:fenwick-intervals}
\end{equation}
where \(\nu_2(q)\) is the largest exponent such that \(2^{\nu_2(q)}\)
divides \(q\).  Let \(R\) be the interval--point incidence matrix
\[
  R_{q,j}=\mathbf1\{j\in I_q\}.
\]
Intervals of a fixed length are disjoint.  Each point therefore belongs to
at most one interval at each of the \(h+1\) possible lengths, while point
\(1\) belongs to the intervals indexed by \(1,2,4,\ldots,N\).  Hence
\begin{equation}
  \lVert R\rVert_{1\to1}=h+1.
  \label{eq:fenwick-r-norm}
\end{equation}

Starting from a prefix endpoint \(q_0=i\) and repeatedly setting
\[
  q_{s+1}=q_s-\ell_{q_s}
\]
partitions \(\{1,\ldots,i\}\) into at most \(h+1\) of the intervals
\(I_q\).  Let row \(i\) of \(L\) indicate those intervals.  Then
\begin{equation}
  LR=T_N,
  \qquad
  \lVert L\rVert_{2\to\infty}\leq\sqrt{h+1},
  \qquad
  \frac{\lVert L\rVert_{\mathrm F}}{\sqrt N}\leq\sqrt{h+1}.
  \label{eq:fenwick-l-norms}
\end{equation}
Thus both costs for \(T_N\) are at most \((h+1)^{3/2}\).

For the nuclear estimate, let \(u_q\) be column \(q\) of \(L\) and let
\(v_q^{\mathsf T}\) be row \(q\) of \(R\).  The interval \(I_q\) can occur
only in prefix decompositions with endpoint between \(q\) and
\(q+\ell_q-1\).  Therefore
\begin{equation}
  \lVert u_q\rVert_2\leq\sqrt{\ell_q},
  \qquad
  \lVert v_q\rVert_1=\ell_q.
  \label{eq:fenwick-atom}
\end{equation}
The exact low-bit count is
\begin{equation}
  \sum_{q=1}^{N}\ell_q
  =
  N\left(1+\frac h2\right).
  \label{eq:low-bit-count}
\end{equation}
Indeed, for \(0\leq s<h\), exactly \(N/2^{s+1}\) indices have
\(\ell_q=2^s\), while \(q=N\) contributes \(\ell_N=N\).  It follows from
\eqref{eq:fenwick-atom} that
\begin{equation}
  \nucpow_{2/3}(T_N)
  \leq
  \sum_{q=1}^{N}\ell_q
  =
  N\left(1+\frac h2\right).
  \label{eq:critical-upper}
\end{equation}

There is one harmless endpoint slack in \eqref{eq:fenwick-atom}.
The interval \(I_N=\{1,\ldots,N\}\) lies entirely inside the matrix.
Its formal endpoint-occurrence range
\[
  N,\ldots,N+\ell_N-1=N,\ldots,2N-1
\]
extends beyond the available prefixes, so \(I_N\) occurs only in the
decomposition of the \(N\)-th prefix.  Consequently
\(\lVert u_N\rVert_2=1\), rather than \(\sqrt{\ell_N}\); the displayed
estimate remains a valid upper bound.

Finally, restrict \(L\) to its initial \(n\) rows and \(R\) to its initial
\(n\) columns.  Their product is \(T_n\), and no relevant row norm, column
norm, Frobenius norm, or atom cost increases.  Since \(N<2n\), changing the
Frobenius normalization from \(\sqrt N\) to \(\sqrt n\) costs less than
\(\sqrt2\).  Since \(h+1=O(\log(n+1))\), all claims follow.
\end{proof}

\begin{theorem}[Fixed-\(p\) phase diagram; restatement of
Theorem~\ref{thm:phase-intro}]
\label{thm:phase}
For every fixed \(0<p<1\),
\begin{equation}
  \nucpow_p(T_n)
  =
  \Theta_p\!\left(
  \begin{cases}
    n, & 0<p<2/3,\\
    n\log(n+1), & p=2/3,\\
    n^{3p/2}, & 2/3<p<1.
  \end{cases}
  \right).
  \label{eq:phase}
\end{equation}
\end{theorem}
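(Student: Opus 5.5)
The plan treats the lower and upper bounds separately; both reduce to dyadic sums already prepared in the paper.

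\emph{Lower bounds.} For $n<32$ (or $n<16$) a lower bound of order $1$ suffices, obtained exactly as in \eqref{eq:unit-entry}--\eqref{eq:unit-subadditivity}: a unit entry forces $\sum_q\lambda_q\geq1$, hence $\nucpow_p(T_n)\geq1$, and the targets are bounded for bounded $n$. The claim $\nucpow_p(T_n)\gtrsim n$ for all $p$ then follows from the columns: $t_n=e_n$ forces little, so instead use the first term of Theorem~\ref{thm:width-profile}, namely $k=1$. This gives $\nucpow_p(T_n)\geq C_p^{-1}D_1(T_n)^p\gtrsim_p n^{3p/2}$, which handles the range $p>2/3$.

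For $p\leq2/3$, apply Theorem~\ref{thm:width-profile} together with Lemma~\ref{lem:width-lower} over the whole range $1\leq k\leq n/16$:
\[
  \nucpow_p(T_n)\gtrsim_p n^{3p/2}\sum_{k\leq n/16}k^{-3p/2}.
\]
For $p<2/3$ the sum is of order $(n/16)^{1-3p/2}/(1-3p/2)$, giving order $n$. For $p=2/3$ this recovers Theorem~\ref{thm:critical}. For $p>2/3$ the sum is bounded below by its first term, consistent with the $k=1$ bound above. Note that the bound $n$ for $p<2/3$ cannot come from $k=1$ alone, since $n^{3p/2}<n$ there; it comes from the top end $k\approx n/16$ of the sum.

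\emph{Upper bounds.} I would reuse the Fenwick atoms of Lemma~\ref{lem:fenwick}. By \eqref{eq:fenwick-atom} we have $\lambda_q\leq\ell_q^{3/2}$, and there are $N/2^{s+1}$ atoms of length $2^s$, plus the top atom. Hence
\[
  \nucpow_p(T_N)\leq\sum_{s<h}\frac{N}{2^{s+1}}2^{3ps/2}+\ell_N^{p}\lambda_N^{\text{adj}},
\]
where for the top atom $\lambda_N=\lVert u_N\rVert_2\,N=N$, contributing $N^p\leq N$. The geometric sum $N\sum_s2^{s(3p/2-1)}$ is $O_p(N)$ for $p<2/3$, equals $O(N\log N)$ at $p=2/3$ (this is \eqref{eq:critical-upper}), and for $p>2/3$ is dominated by its top term, giving $O_p(N^{3p/2})$. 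Restricting to $T_n$ and using $N<2n$ (restriction does not increase atom costs) finishes the upper bounds.

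\emph{Main obstacle.} The delicate point is making sure the lower bound of order $n$ for $p<2/3$ genuinely uses the full range of $k$ up to $n/16$, and that $n\geq16$ is needed there with small $n$ handled separately. The constants degenerating like $|1-3p/2|^{-1}$ are permitted, since the statement only asserts $\Theta_p$.
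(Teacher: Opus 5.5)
Your proposal is correct and follows the paper's proof. The lower orders come from Theorem~\ref{thm:width-profile} with Lemma~\ref{lem:width-lower} summed over \(1\leq k\leq n/16\), with the unit-entry bound handling small \(n\), and the upper orders come from the Fenwick low-bit count restricted from \(N<2n\) to \(n\). The only differences are cosmetic: you bound the top atom by \(N^p\) using \(\lVert u_N\rVert_2=1\), where the paper uses the cruder \(N^{3p/2}\); and your opening claim that \(\nucpow_p(T_n)\gtrsim n\) follows ``from the columns'' is a false start you should delete, since the \(k=1\) term yields only \(n^{3p/2}\), as you note yourself later.
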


\begin{proof}
For \(n\geq32\), Theorem~\ref{thm:width-profile} and
Lemma~\ref{lem:width-lower} yield
\begin{equation}
  \nucpow_p(T_n)
  \gtrsim_p
  n^{3p/2}
  \sum_{k=1}^{\lfloor n/16\rfloor}k^{-3p/2}.
  \label{eq:phase-lower}
\end{equation}
The sum has order \(n^{1-3p/2}\), \(\log(n+1)\), or \(1\), according as
\(p<2/3\), \(p=2/3\), or \(p>2/3\).  This gives the three lower orders.

For the upper bound, use the Fenwick factorization at \(N=2^h\).  By
\eqref{eq:fenwick-atom}, atom \(q\) has product cost at most
\(\ell_q^{3/2}\).  The low-bit multiplicities give
\begin{equation}
  \nucpow_p(T_N)
  \leq
  \frac N2
  \sum_{s=0}^{h-1}2^{s(3p/2-1)}
  +
  N^{3p/2}.
  \label{eq:phase-upper}
\end{equation}
The geometric sum has the three orders in \eqref{eq:phase}.  Restricting
from \(N<2n\) to \(n\) preserves those orders.  For the finitely many
\(n<32\), the unit-entry argument in
\eqref{eq:unit-entry}--\eqref{eq:unit-subadditivity}, with exponent \(p\),
gives \(\nucpow_p(T_n)\geq1\), and the constants may be adjusted.
\end{proof}

\begin{proof}[Proof of Theorem~\ref{thm:main-intro}]
The lower bounds are Theorem~\ref{thm:factor-lower}.  The upper bounds are
Lemma~\ref{lem:fenwick}, and the middle inequality is
\eqref{eq:frob-row}.
\end{proof}

\begin{corollary}[Optimized squared-error order; restatement of
Corollary~\ref{cor:error-intro}]
\label{cor:error-order}
For every \(\eps>0\), within the pure-\(\eps\)-DP Laplace matrix-mechanism
class and over arbitrary finite real factors,
\begin{align*}
  \inf_{T_n=LR}\MaxSE(\mathcal M_{L,R},n)
  &=
  \Theta\!\left(\frac{\log^3(n+1)}{\eps^2}\right),\\
  \inf_{T_n=LR}\MeanSE(\mathcal M_{L,R},n)
  &=
  \Theta\!\left(\frac{\log^3(n+1)}{\eps^2}\right).
\end{align*}
\end{corollary}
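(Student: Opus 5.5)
The corollary follows by combining Proposition~\ref{prop:error} with Theorem~\ref{thm:main-intro}; no new estimate is needed. First, I would invoke \eqref{eq:optimized-errors}. For every \(\eps>0\) it identifies the two optimized errors exactly:
\[
  \inf_{T_n=LR}\MaxSE(\mathcal M_{L,R},n)=\frac{2\ctwo(T_n)^2}{\eps^2},
  \qquad
  \inf_{T_n=LR}\MeanSE(\mathcal M_{L,R},n)=\frac{2\cfrob(T_n)^2}{\eps^2}.
\]
The infima on the left range over the same class of finite real factorizations as in \eqref{eq:cf-def}--\eqref{eq:c2-def}. Each \(\mathcal M_{L,R}\) is \(\eps\)-DP by Proposition~\ref{prop:privacy}, so the left side is exactly the pure-\(\eps\)-DP Laplace matrix-mechanism class named in the statement.

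Next, I would square the two-sided bound \eqref{eq:main-intro}. For every \(n\geq1\) it gives
\[
  c^2\log^3(n+1)\leq\cfrob(T_n)^2\leq\ctwo(T_n)^2\leq C^2\log^3(n+1).
\]
Multiplying by \(2/\eps^2\) then yields
\[
  \frac{2c^2\log^3(n+1)}{\eps^2}
  \leq
  \inf\MeanSE
  \leq
  \inf\MaxSE
  \leq
  \frac{2C^2\log^3(n+1)}{\eps^2}.
\]
Both \(\Theta\)-statements follow, with absolute constants \(2c^2\) and \(2C^2\) that are uniform in both \(n\) and \(\eps\). Explicitly, one may take \(c=c_{\mathrm N}^{3/2}\) from Theorem~\ref{thm:factor-lower}, and \(C\) from Lemma~\ref{lem:fenwick}.

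There is no real obstacle here. The only point requiring care is bookkeeping: the \(\eps\)-dependence enters solely through the exact factor \(2/\eps^2\) of Proposition~\ref{prop:error}. The hidden constants are therefore independent of \(\eps\), and the statement holds for every \(\eps>0\), not only for \(\eps\leq1\). I would also remark that the upper bound is attained by a single factorization, the Fenwick one, simultaneously for both criteria. The lower bound, by contrast, holds for every factorization, including the countable inner dimensions covered in Remark~\ref{rem:countable-inner}.
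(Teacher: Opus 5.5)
Your proof is correct and follows the paper's own argument: privacy from Proposition~\ref{prop:privacy}, the exact identification \eqref{eq:optimized-errors} from Proposition~\ref{prop:error}, then squaring the two-sided bound of Theorem~\ref{thm:main-intro}, with \(\eps\) entering only through the exact factor \(2/\eps^2\). One caveat on your closing aside: Remark~\ref{rem:countable-inner} extends only the cost inequality to countable inner dimension and explicitly does not define a Laplace mechanism with countably many noise coordinates, so it gives no squared-error lower bound for such mechanisms.
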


\begin{proof}
Every mechanism \(\mathcal M_{L,R}\) is \(\eps\)-differentially private by
Proposition~\ref{prop:privacy}.  Proposition~\ref{prop:error} identifies the
two optimized errors as \(2\ctwo(T_n)^2/\eps^2\) and
\(2\cfrob(T_n)^2/\eps^2\), and Theorem~\ref{thm:main-intro} evaluates both
costs at \(\Theta((\log(n+1))^{3/2})\).
\end{proof}

}

\section{Interpretation and limitations}
\label{sec:limitations}

\subsection{What the exponent boundary says}

Theorem~\ref{thm:main-intro} rules out an exponent improvement obtained
solely by allowing signed, dense, rectangular factors of arbitrary inner
dimension.  What replaces the support-intersection step available for binary
factors is that the prefix-specific input measures all low-rank aggregate
widths simultaneously, and that cancellation stays inside an exact signed
residual until a norm is applied.

The critical value \(p=2/3\) is forced by the interaction of three scales: in
the low-rank range \(1\leq k\leq n/16\),
\[
  D_k(T_n)\asymp n^{3/2}k^{-1/2},
  \qquad
  k^{-p}D_k(T_n)^p
  \asymp
  n^{3p/2}k^{-3p/2},
\]
and the series becomes harmonic exactly when \(3p/2=1\).  That range is the
one Lemmas~\ref{lem:width-lower}--\ref{lem:width-upper} establish and the only
one the proof uses; the two-sided estimate does not extend to all \(k\leq n\),
since the columns of \(T_n\) are linearly independent and hence
\(D_n(T_n)=0\).  Hölder then
converts the \(2/3\)-power into a \(3/2\) exponent for the factorization
cost.

\subsection{Mechanism and privacy scope}

The result concerns the factorization class in
\eqref{eq:mechanism}.  It does not prove a lower bound for every interactive
or non-matrix continual mechanism, and it therefore leaves untouched the
second half of the open problem of~\cite{arkhipov-kalinin}.  It is also
specific to pure \(\eps\)-differential privacy through the \(\ell_1\)
sensitivity \(\lVert R\rVert_{1\to1}\) of Lemma~\ref{lem:sensitivity};
approximate-DP mechanisms governed by \(\ell_2\) sensitivity have a different
factorization geometry, and their order for \(T_n\) is
\(\log n\) rather than \(\log^{3/2}n\)~\cite{fichtenberger,henzinger-uu}.

Corollary~\ref{cor:error-order} concerns the maximum of coordinatewise
expected squared error and the mean expected squared error.  By
Remark~\ref{rem:expected-max}, it does not determine an expected maximum
across coordinates; that criterion is the one bounded
by~\cite{bairaktari-larsen}, whose result is neither implied by nor implies
Theorem~\ref{thm:main-intro}.  No empirical or learning-theoretic conclusion
follows from these matrix costs.

\subsection{Attribution}
\label{sec:attribution}

The conversion from \(p\)-summable rank-one coefficients to a weighted
nuclear-norm approximation profile belongs to the approximation-space theory
of the Jena school.  It is a finite-dimensional specialization of Pietsch's
Transformation Theorem~\cite{pietsch-approximation}, and Theorem 7.1 of
Hinrichs and Pietsch~\cite{hinrichs-pietsch} states the corresponding
operator-ideal inclusion directly; Remark~\ref{rem:classical-conversion}
records both reductions with their parameter transfers.
Theorem~\ref{thm:width-profile} contributes the explicit constant \(C_p\) and
a self-contained finite-dimensional proof, nothing more.  The matrix \(T_n\)
is likewise classical: it is the coefficient matrix of the finite summation
operator \(\Sigma_n:\ell_1^n\to\ell_\infty^n\) of Pietsch and
Wenzel~\cite[\S0.7.3]{pietsch-wenzel}, read here instead as an operator
\(\ell_\infty^n\to\ell_2^n\).

The prefix-dependent steps proved in this paper are the width estimate of
Lemma~\ref{lem:width-lower}, its evaluation at every fixed \(p\) in
Theorem~\ref{thm:phase}, and the transfer to \(\cfrob\) and \(\ctwo\) in
Theorem~\ref{thm:factor-lower}.  This describes the proof decomposition, not a
claim that the statements are absent from all earlier literature.  The
comparison with~\cite{arkhipov-kalinin} is versioned and contract-specific:
that source states the arbitrary-factor lower bound as open, and
Theorem~\ref{thm:main-intro} supplies it under the same two cost definitions.

\section{Conclusion}

For the prefix workload, arbitrary signed real factorizations have the same
\((\log(n+1))^{3/2}\) asymptotic cost as the known upper constructions, under
both the normalized Frobenius and the maximum-row objective.  For
\(\eps>0\), within the finite-dimensional pure-\(\eps\)-DP Laplace
matrix-mechanism class the resulting optimized maximum and mean squared errors
have order \(\Theta(\eps^{-2}\log^3(n+1))\).  Relative to the exact contract
stated as open in Arkhipov--Kalinin v1, the matrix lower bound covers arbitrary
signed, dense, rectangular factors of arbitrary finite inner dimension and, as
a cost inequality only, countable inner dimension by
Remark~\ref{rem:countable-inner}.

The proof combines a prefix-specific aggregate-width estimate with a
classical approximation-space engine: the transformation theorem of Pietsch,
or the corresponding operator-ideal inclusion of Hinrichs and Pietsch, supplies
the generic conversion from \(p\)-summable atomic coefficients to a weighted
nuclear-norm approximation profile, and the direct tail and reverse-Hardy
argument included here supplies an explicit constant.  At the critical
exponent \(2/3\), the prefix profile produces a harmonic lower bound, and
Hölder transfers it to factorization energy.  The same analysis yields
matching fixed-\(p\) asymptotics for the finite prefix operator, together with
a self-contained Fenwick upper construction.

Extending the lower bound beyond matrix mechanisms, closing the remaining
\(\log n\) gap between \(\Omega(\eps^{-2}\log^2 n)\) for general pure-DP
continual counting and \(\Theta(\eps^{-2}\log^3 n)\) for the matrix class, and
sharpening the constants are separate questions.

\clearpage
\appendix
\CandidateBProofAppendices

\end{document}